\newcommand{\ketbra}[2]{|#1\rangle\!\langle#2|}
\definecolor{mygrey}{gray}{0.35}
\definecolor{myblue}{rgb}{0.2,0.2,0.8}
\definecolor{myzard}{cmyk}{0,0,0.05,0}
\definecolor{mywhite}{rgb}{1,1,1}
\definecolor{myred}{rgb}{0.9,0.1,0.}
\newtheoremstyle{customStyle1}  
{0pt}       
{0pt}       
{\normalfont}   
{\parindent}        
{\em}  
{. --}   	 
{.5em}       
{\thmname{#1}\thmnumber{ #2}\thmnote{ (#3)}}  
\newtheorem{theorem}{Theorem}
\newtheorem{lemma}[theorem]{Lemma}
\newtheorem{definition}[theorem]{Definition}
\newtheorem{proposition}[theorem]{Proposition}
\newenvironment{proof}[1][\protect\proofname]{\par
	\normalfont\topsep6\p@\@plus6\p@\relax
	\trivlist
	\itemindent\parindent
	\item[\hskip\labelsep\scshape #1]\ignorespaces
}{%
\endtrivlist\@endpefalse
}
\providecommand{\proofname}{Proof}
\newtheorem*{theorem*}{Theorem}
\newtheorem*{lemma*}{Lemma}
\newtheorem*{corollary*}{Corollary}
\newtheorem*{observation*}{Observation}
\newtheorem*{proposition*}{Proposition}
\newcommand{\tr}{\operatorname{\bf{tr}}} 
\newcommand{\Real}{\operatorname{\bf{Re}}} 
\newcommand{\diag}{\operatorname{\bf{diag}}} 
\newcommand{\id}{\mathbbm{1}}
\newcommand{\rr}{\mathbbm{R}}
\let\oldforall\forall
\renewcommand{\forall}{\quad \oldforall}
\begin{document}
\title{Quantifying Operations with an Application to Coherence}

\author{Thomas Theurer}
\thanks{These two authors contributed equally}
\affiliation{Institute of Theoretical Physics and IQST, Universität Ulm, Albert-Einstein-Allee
	11, D-89069 Ulm, Germany}

\author{Dario Egloff}
\thanks{These two authors contributed equally}
\affiliation{Institute of Theoretical Physics and IQST, Universität Ulm, Albert-Einstein-Allee
	11, D-89069 Ulm, Germany}

\author{Lijian Zhang}
\affiliation{National Laboratory of Solid State Microstructures and College of Engineering and Applied Sciences, Nanjing University, Nanjing 210093, China }
\affiliation{Collaborative Innovation Center of Advanced Microstructures, Nanjing University, Nanjing 210093, China }

\author{Martin B. Plenio}
\affiliation{Institute of Theoretical Physics and IQST, Universität Ulm, Albert-Einstein-Allee
	11, D-89069 Ulm, Germany}

\begin{abstract}
To describe certain facets of non-classicality, it is necessary to quantify properties
of operations instead of states. This is the case if one wants to quantify how well an
operation detects non-classicality, which is a necessary prerequisite for its use
in quantum technologies. To do so rigorously, we build resource theories on the level
of operations, exploiting the concept of resource destroying maps.
We discuss the two basic ingredients of these resource theories, the free operations and
the free super-operations, which are sequential and parallel concatenations with free
operations. This leads to defining properties of functionals that are well suited to
quantify the resources of operations. We introduce these concepts at the example of
coherence. In particular, we present two measures quantifying the ability of an operation
to detect, i.e. to use, coherence, one of them with an operational interpretation, and provide methods to evaluate them. 
\end{abstract}

\maketitle
{\em Introduction.} --
In recent years, there has been an increasing interest in quantum technologies.
To investigate rigorously which properties of quantum mechanics are responsible for potential operational advantages, quantum resource theories were developed, see for example~\cite{PhysRevLett.78.2275,horodecki2003Purity,aberg2006quantifying,gour2008referenceFrames,horodecki2013quantumness,PhysRevLett.111.250404,veitch2014ResourceStabilizer,grudka2014QuantifyingContextuality,BaumgratzPhysRevLett.113.140401,del2015resourceKnowledge}.
These resource theories originate from constraints that are imposed in addition to the laws of quantum mechanics, motivated either by physical or by practical considerations.  From the constraints follow the free states and the free operations, which are the ones that can be prepared and executed without violation of the constraints.
These two main ingredients allow for the formulation of a rigorous theoretical framework in which to analyze quantitatively the amount of the resource present in quantum \emph{states} and its usefulness in operational tasks~\cite{plenio2007introduction,HorodeckiRevModPhys.81.865,StreltsovRevModPhys.89.041003}.
In addition, there exist quantum operations that can be considered resources as well, because they are not free. Therefore, a complementary question to ask is how valuable these operations are~\cite{nielsen2003quantum}. This question is often approached by the evaluation of quantities such as the resource generation capacity, i.e. the maximal increase of the resource in an input state under application of the operation, or the resource cost, i.e. the minimal amount of resources needed to simulate a non-free operation by means of free operations~\cite{PhysRevA.62.052317,PhysRevA.64.032302,1214070,PhysRevA.92.032331,xi2015entropic,garcia2016note,BU20171670,PhysRevA.95.062327}.
As we will discuss later and in the Supplemental Material (SM)~\footnote{See Supplemental Material in the appendix, which includes references~\cite{datta2017coherence,bera2018quantifying,1751-8121-50-4-045301,PhysRevA.96.022327,vidal2000entanglementMonotones,du2015coherence,du2015erratum,michelson1887relative,PhysRevLett.59.2044,hariharan2010basics,PhysRevLett.116.061102adapted}, for proofs, details, and examples},  these methods cannot be used to quantify all relevant properties of quantum operations.
Hence the situation merits a broader approach and this is why we are examining a broader framework.
More concretely, we will build formal resource theories on the level of \emph{operations}, allowing to quantify the value of operations directly.
This is also interesting from a conceptual point of view: The  goal of quantum technologies is to perform \emph{tasks} that are impossible using classical technologies. This includes sensing at high precision~\cite{RevModPhys.89.035002}, efficient processing of information, and secure transmission of data~\cite{nielsen2010quantum}. Ultimately, this is all achieved by quantum operations, i.e. dynamical resources.
Hence it seems natural to quantify the value of operations directly without the detour through states as the latter are static resources that have to be transformed into dynamic resources using free operations. Since quantum states can be seen as quantum operations with no input and a constant output (describing a quantum mechanical preparation apparatus), a resource theory on the level of operations can quantify the value of states, too, leading to a unified resource theoretic treatment of states and operations. Therefore we expect that resource theories on the level of operations will be a key method to the systematic exploration of quantum advantages.
In this work, we will exemplify the concepts and advantages of resource theories of operations at the example of coherence.

A fundamental ingredient to the departure of quantum mechanics from classical physics is the omnipresence of the superposition principle~\cite{Schrodinger1935a,PhysRevLett.116.080402}. This has led to the development of rigorous resource theories of coherence~\cite{aberg2006quantifying,BaumgratzPhysRevLett.113.140401,WinterPhysRevLett.116.120404,StreltsovRevModPhys.89.041003}, which allow to investigate the role of coherence in quantum technological applications~\cite{HilleryPhysRevA.93.012111,PhysRevLett.116.150502,Matera2058-9565-1-1-01LT01}.
These theories are formulated on the level of states and mainly focused on the inability to create coherence.
However, this is only half of the picture: To exploit coherences or more generally quantum superpositions~\cite{PhysRevLett.116.080402,PhysRevLett.119.230401} in technologies, it is both necessary to have access to operations that can create coherence and operations that can detect it in the sense that its presence  makes a difference in the measurement statistics~\cite{PhysRevX.6.041028,smirne2017coherence}. If we cannot detect or equivalently use coherence or, more generally, non-classicality, there cannot be an operational advantage in its presence. This is also reflected in ongoing  efforts to describe detectors for non-classicality~\cite{PhysRevLett.92.113602,Zhang2012,Cooper2014,Najafi2015,Izumi2018}.
As discussed in~\cite{PhysRevX.6.041028,Biswas20170170} and the SM, this is particularly clear in interferometric experiments. Therefore an answer to the question ``How well can a quantum operation detect coherence?" is needed to understand quantum advantages. 
To the best of our knowledge, and as we will discuss now, this question cannot be addressed using a resource theory on the level of states.

Although there exist mathematical frameworks for coherence theories 
on the level of states in which the free operations cannot make use of coherence~\cite{PhysRevLett.110.070502,WinterPhysRevLett.116.120404,ChitambarPhysRevLett.117.030401,PhysRevX.6.041028,PhysRevA.94.052324,PhysRevA.94.052336,PhysRevA.95.019902}, this is problematic  from a conceptual point of view:
Ideally, the presence of resources in states should be detectable by free operations, because 
this is a necessary prerequisite that such states can allow for operational advantages over free 
operations alone. If this is not possible, then it is misleading to consider a state to be resourceful
(see also~\cite{egloff2018local}), as is the situation in the theories cited above. This also 
implies that it is not possible to address the coherence detection capabilities of operations in these frameworks via the resource cost of states. 
In frameworks where coherence is useful, its detection is, as mentioned above, necessarily free, leading to a zero resource cost, which therefore cannot be used to address the coherence detection capabilities of operations either.
On the other hand, as we are interested in the question how well an operation can 
detect coherence, its coherence generation capacity cannot be the figure of merit, and therefore we cannot 
address the coherence detection capabilities of an operation based on a resource theory on the 
level of states. We refer to the SM for more details and proofs of these observations, including a 
discussion why we cannot use the coherence of the corresponding Choi state~\cite{CHOI1975285} to 
quantify the coherence of operations and why this should not be expected.

In contrast, in this Letter we will show that the coherence detection capability of operations can be quantified rigorously and that the conceptual problem discussed above vanishes if we use a resource theory on the level of operations. We will first introduce the two basic ingredients to such a theory: The free operations and the free super-operations, which map operations to operations and consist naturally of sequential and parallel concatenations with free operations~\cite{chiribella2008transforming,COECKE201659,zhuang2018resource}. From these ingredients we deduce defining properties of functionals which are well suited to quantify the value of operations. Then we present two such functionals quantifying how well an operation can detect coherence, one based on the diamond norm that can be calculated efficiently and another one, based on the induced trace norm, which has  a clear operational interpretation. We give examples for the value of operations according to these measures and conclude with an outlook on open questions.

The framework we introduce can be extended easily to operations that cannot create coherence and operations that can neither detect nor create it. We comment on results in this direction in the SM.
In a forthcoming work, our theoretical results will be used in the analysis of an experiment based 
on a photo-detector with a tunable degree of coherence detection capability~\cite{Exp}. All proofs can be found in the SM.

{\em Basic framework.} --
Since coherence is a basis dependent concept, we fix for all systems $A$ an orthonormal basis
$\ket{i}^A$ which we call incoherent. This basis is singled out by the physics of an actual system or the computational basis in a quantum algorithm.  From now on, coherences and populations will be seen with respect to the incoherent basis. The incoherent basis of a system composed of two subsystems $A$ and $B$ is given by the product basis of their incoherent bases. If it is clear from the context, we will omit the labels indicating the systems from here on. All states $\rho$ that are a statistical mixture of the incoherent basis states, i.e.
\begin{align}
	\rho=\sum_i p_i \ketbra{i}{i}
\end{align}
are called incoherent.  In the following, we make frequent use of the total dephasing operation $\Delta$
\begin{align}\label{eq:Delta}
\Delta\left(\rho\right) = \sum_i \ketbra{i}{i} \rho \ketbra{i}{i}
\end{align} which is a resource destroying map~\cite{PhysRevLett.118.060502} in coherence theory, i.e. its output is always incoherent. The total dephasing operation on a composed system is the tensor product of the total dephasing operations on the subsystems.
If we concatenate operations, we will always implicitly assume that they match, i.e. the output dimension of the first operation equals the input dimension of the second operation. In addition, we will not write the concatenation operator $\circ$ if not necessary.

To construct a resource theory that allows us to answer the question how well a quantum operation can detect coherences, we need to define the free operations and super-operations. Let us begin with the free operations.
First we notice that a Positive Operator-Valued Measure (POVM) cannot detect coherences if the measurement statistics are independent of them. This leads us to the following definition:
\begin{definition}\label{def:mincPOVM}
	A POVM given by $\{P_n\}:P_n\ge0,\sum_nP_n=\id$ is free iff
	\begin{align}
		\tr P_n \Delta \rho = \tr P_n \rho \ \forall \rho, n.
	\end{align}
\end{definition}
As one expects, all free POVMs are of the following form:
\begin{proposition} \label{prop:mincPOVM}
	A POVM is free iff
	\begin{align}
	P_n= \sum_i P_i^n\ketbra{i}{i} \ \forall n.
	\end{align}
\end{proposition}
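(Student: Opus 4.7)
The plan is to prove the two directions separately, with the main tool being the self-adjointness of $\Delta$ with respect to the Hilbert--Schmidt inner product.

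For the \emph{if} direction, I would take an arbitrary operator of the form $P_n = \sum_i P_i^n \ketbra{i}{i}$ and compute directly. Since $\langle i|\Delta\rho|i\rangle = \langle i|\rho|i\rangle$ for every incoherent basis state, we immediately obtain
\begin{align}
\tr P_n \Delta\rho = \sum_i P_i^n \langle i|\Delta\rho|i\rangle = \sum_i P_i^n \langle i|\rho|i\rangle = \tr P_n \rho,
\end{align}
which establishes Definition~\ref{def:mincPOVM}. Positivity and normalization of the POVM are unchanged.

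For the \emph{only if} direction, the key step is to observe that the total dephasing map $\Delta$ defined in Eq.~\eqref{eq:Delta} is self-adjoint with respect to the Hilbert--Schmidt inner product, i.e.\ $\tr(X\Delta Y) = \tr(\Delta(X) Y)$ for all operators $X,Y$. This follows from a one-line calculation using the cyclicity of the trace and the idempotence of the projectors $\ketbra{i}{i}$. Assuming freeness therefore yields
\begin{align}
\tr\bigl(\Delta(P_n)\rho\bigr) = \tr(P_n \Delta\rho) = \tr(P_n\rho) \quad \forall \rho.
\end{align}
Since this equality holds for every state $\rho$ (and hence, by linearity, every operator), I can conclude $\Delta(P_n) = P_n$, which is precisely the statement that $P_n$ is diagonal in the incoherent basis.

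The only subtlety — and what I expect to be the ``main obstacle,'' though it is quite mild — is ensuring that testing against all density operators $\rho$ is enough to deduce operator equality, rather than needing to test against arbitrary Hermitian matrices; this is immediate because the set of density operators spans the space of Hermitian operators. An equivalent and perhaps more transparent alternative would be to plug in the rank-one test states $\rho = \ketbra{i}{j} + \ketbra{j}{i}$ (or $i(\ketbra{i}{j}-\ketbra{j}{i})$) for $i\neq j$ to show that each off-diagonal matrix element $\langle i|P_n|j\rangle$ vanishes, but the self-adjointness argument is more compact and makes the role of $\Delta$ as a resource-destroying map explicit.
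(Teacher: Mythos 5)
Your proof is correct, but it takes a different route from the paper's. For the \emph{only if} direction the paper does not invoke the Hilbert--Schmidt self-adjointness of $\Delta$; instead it tests Definition~\ref{def:mincPOVM} directly on the pure states $\ket{\psi}=\tfrac{1}{\sqrt{2}}(\ket{a}+e^{i\phi}\ket{b})$ with $a\neq b$, which reduces the free condition to $\Real\bigl(P^n_{a,b}e^{-i\phi}\bigr)=0$ for all $\phi$, forcing each off-diagonal entry $P^n_{a,b}$ to vanish one at a time. That is essentially the ``rank-one test state'' variant you mention as an alternative, except that the paper's test states are genuine density operators, so no spanning argument beyond varying the phase is needed. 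Your adjoint argument, $\tr(\Delta(P_n)\rho)=\tr(P_n\Delta\rho)=\tr(P_n\rho)$ for all $\rho$ hence $\Delta(P_n)=P_n$, is more compact and makes transparent that the result holds for any self-adjoint resource-destroying projection, at the mild cost of the spanning step (density operators span the Hermitian operators), which you correctly identify and dispatch. The \emph{if} direction is handled essentially identically in both (the paper simply declares sufficiency obvious). No gaps.
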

Next we define general free operations, where we need to address subselection (by measurement results) in a consistent manner. Since the ability to do subselection depends on the actual experimental implementation, we adopt the point of view that this is a resource in itself. In general, we can have a quantum instrument $\mathcal{I}$  which allows us to do subselection according to a variable $x$, i.e we obtain with probability $p_x=\tr(\mathcal{E}_x(\rho))$ an output $\rho_x=\mathcal{E}_x(\rho)/p_x$. From the definition of the free POVMs follows that we can store the outcome $x$ in the incoherent basis of an ancillary system, which we write as
\begin{align}
	\tilde{\mathcal{I}}(\rho)=\sum_x \mathcal{E}_x(\rho)\otimes\ketbra{x}{x},
\end{align}
and implement the subselection later using a free POVM. In the special case of a POVM $\mathcal{P}$, we can represent it by
\begin{align}
	\tilde{\mathcal{P}}(\rho)=\sum_n \tr(P_n \rho) \ketbra{n}{n}.
\end{align}
Treating subselection in this way, we can reduce our analysis to trace preserving operations.

With subselection included into our framework, we call a quantum operation free if it cannot turn a free POVM into a non-free one by applying the operation prior to the measurement. This is exactly the case if it cannot transform coherences into populations \cite{PhysRevLett.110.070502}.
\begin{definition}\label{def:minc}
	A quantum operation $\Phi_\text{d-inc}$
	is called detection-incoherent iff
	\begin{align}
	\Delta \Phi_\text{d-inc} =&\Delta \Phi_\text{d-inc} \Delta.
	\end{align}
	The set of detection-incoherent operations is denoted by $\mathcal{DI}$.
\end{definition}
Note that this condition has been called \text{nonactivating} in \cite{PhysRevLett.118.060502}. With our convention for treating subselection, this includes Def.~\ref{def:mincPOVM} for POVMs.
As we mentioned in the introduction, it is both important to create and to detect coherence, therefore one can define creation-incoherent operations, i.e. operations which cannot create coherence. In coherence theory, these operations are called MIO (for maximally incoherent operations)~\cite{aberg2006quantifying,diaz2018using} or nongenerating in a general context in~\cite{PhysRevLett.118.060502}. Operations that can neither create nor detect coherence are called DIO (dephasing-covariant incoherent operations)~\cite{ChitambarPhysRevLett.117.030401,PhysRevA.94.052324,PhysRevA.94.052336,PhysRevA.95.019902}, classical operations~\cite{PhysRevLett.110.070502} or commuting~\cite{PhysRevLett.118.060502}.
\begin{definition}\label{def:creaIncInc}
	A quantum operation $\Phi_\text{c-inc}$ from system $A$ to $B$
	is called creation-incoherent, if it cannot create coherence in system $B$ if none were present in system $A$,
	\begin{align}
	\Phi_{\text{c-inc}} \Delta =&\Delta \Phi_\text{c-inc} \Delta.
	\end{align}
	A quantum operation $\Phi_\text{dc-inc}$ is called detection-creation-incoherent if it can neither detect nor create coherence,
	\begin{align}
	\Delta \Phi_\text{dc-inc} = \Phi_\text{dc-inc}  \Delta.
	\end{align}
\end{definition}
Our contribution in this work is that we show how to quantify the abilities to create and detect coherence in a rigorous manner. Note that formally, the three definitions of free operations lead to different resource theories. In the following, we will use ``free operation'' if it is unimportant which specific choice we are considering.
This allows us to introduce the second ingredient to our resource theories, the free super-operations, in a unified manner. A super-operation is free if it is a sequential and/or parallel concatenation with free operations.
\begin{definition}\label{def:freeSup}
	For free operations $\Phi$, elemental free super-operations are given by
	\begin{align}
	\mathcal{E}_{1,\Phi} \left[\Theta\right]=&\Phi \circ \Theta,&&\mathcal{E}_{2,\Phi}\left[\Theta\right]=\Theta  \circ \Phi, \nonumber \\
	\mathcal{E}_{3,\Phi}\left[\Theta\right]=&\Theta  \otimes \Phi, &&\mathcal{E}_{4,\Phi}\left[\Theta\right]=  \Phi\otimes\Theta .
	\end{align}
	A super-operation $\mathcal{F}$ is free iff it can be written as a sequence of free elemental super-operations,
	\begin{align}
	\mathcal{F}= \mathcal{E}_{i_n,\Phi_n}...\mathcal{E}_{i_3,\Phi_3} \mathcal{E}_{i_2,\Phi_2} \mathcal{E}_{i_1,\Phi_1}.
	\end{align}
\end{definition}
This definition comes from a quantum computational setting: a free super-operation is a network of free operations, into which we can plug a quantum operation.
A minimal requirement on the free super-operations is that they transform free operations into free operations, otherwise it would be possible to create resources for free. This requirement can be checked directly, see the SM. It is also straightforward to show that every free super-operation can be composed using only three elemental operations (see the SM and~\cite{chiribella2008transforming,COECKE201659}).
Whilst we focus on the ability to detect coherence in the main text, we present a few results for the other two classes of free operations in the SM (see also~\cite{PhysRevLett.110.070502,diaz2018using}).
As mentioned in the introduction, the case of coherence treated here is an example of our general setup: If one exchanges the resource destroying map
in Eq.~\ref{eq:Delta}, one can move on to Defs.~\ref{def:minc}, \ref{def:creaIncInc}, and \ref{def:freeSup}. It is also possible to define free operations without the usage of resource destroying maps and to use Def.~\ref{def:freeSup} for free super-operations~\cite{COECKE201659}.

{\em Detecting coherence.} --
To quantify the amount of a resource present in an operation, we follow the usual axiomatic approach of quantum resource theories~\cite{PhysRevLett.78.2275,horodecki2003Purity,aberg2006quantifying,gour2008referenceFrames,horodecki2013quantumness,PhysRevLett.111.250404,veitch2014ResourceStabilizer,grudka2014QuantifyingContextuality,BaumgratzPhysRevLett.113.140401,del2015resourceKnowledge}. From physical considerations, we collect a set of defining properties that every measure of the resource should obey. The first property is that the measure should be faithful, which means that it needs to be zero on the set of free operations and larger than zero on non-free operations. The second property is monotonicity under the free super-operations, i.e. the amount of resource can only decrease under the application of a free super-operation. With our convention concerning subselection, this ensures monotonicity under subselection as well~\cite{yu2016alternative}. The third property is convexity and can be seen as a matter of convenience. It ensure that mixing does not create resources. These properties lead to the following definition.
\begin{definition}\label{def:Monotone}
	A functional $M$ from quantum operations to the positive real numbers is called a resource measure
	iff
	\begin{align}
		&M\left(\Theta\right)=0 \Leftrightarrow \Theta \text{ free}, \nonumber \\
		&M\left(\Theta\right)\ge M\left(\mathcal{F}\left[\Theta\right]\right) \forall \Theta, \forall \text{ free super-operations} \ \mathcal{F} , \nonumber \\
		&M\left(\Theta\right) \text{ is convex}.
	\end{align}
\end{definition}
A functional that is a measure according to the above definition is of special interest if it has a clear operational interpretation, i.e. if the number it puts on a resource is directly connected to its value in a specific application. Often resource measures are hard to evaluate, thus measures that have a closed form expression or can be calculated efficiently using numerical methods are important as well. In the following, we will give one resource measure with respect to the ability to detect coherence that can be calculated efficiently and another one with an operational interpretation. Both involve norms on quantum operations. Therefore we review some related terminology first.
A norm $\|\cdot\|$ on quantum operations is called sub-multiplicative iff
\begin{align}
	\|\Theta_1\circ \Theta_2 \|\le \|\Theta_1\| \| \Theta_2 \| \ \forall \Theta_1,\Theta_2
\end{align}
and sub-multiplicative with respect to tensor products iff
\begin{align}
	\|\Theta_1\otimes \Theta_2 \|\le \|\Theta_1\| \| \Theta_2 \| \ \forall \Theta_1,\Theta_2.
\end{align}
Norms with the above properties can be used to define measures.
\begin{proposition}\label{prop:genMonotone}
	Let $\|\cdot\|$ denote a sub-multiplicative norm on quantum operations which is sub-multiplicative with respect to tensor products. If $\|\Phi\|\le 1$ for all $\Phi$ detection-incoherent, the functional
	\begin{align}
		M(\Theta)=\min_{\Phi \in \mathcal{DI}} \|\Delta \Theta -\Delta \Phi \|
	\end{align}
	is a measure in the detection-incoherent setting.
\end{proposition}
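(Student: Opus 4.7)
The plan is to verify the three defining properties of Definition~\ref{def:Monotone} in turn: faithfulness, monotonicity, and convexity. Two structural observations will be used throughout: (a) $\mathcal{DI}$ is convex and closed under both sequential composition and tensor products, which follows by inserting $\Delta=\Delta^2$ into the identity $\Delta\Phi=\Delta\Phi\Delta$ and exploiting $\Delta_{AB}=\Delta_A\otimes\Delta_B$; and (b) $\mathcal{DI}$ is a compact subset of the finite-dimensional space of superoperators, so the minimum in the definition of $M$ is attained.

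For faithfulness, the direction $\Theta\in\mathcal{DI}\Rightarrow M(\Theta)=0$ is immediate by taking $\Phi=\Theta$. For the converse, a minimizer $\Phi^*\in\mathcal{DI}$ yields $\Delta\Theta=\Delta\Phi^*$; multiplying by $\Delta$ on the right and using $\Delta\Phi^*=\Delta\Phi^*\Delta$ gives $\Delta\Theta\Delta=\Delta\Theta$, so $\Theta\in\mathcal{DI}$. Convexity of $M$ follows from the convexity of $\mathcal{DI}$: given optimizers $\Phi_1^*,\Phi_2^*$ for $\Theta_1,\Theta_2$, the combination $\lambda\Phi_1^*+(1-\lambda)\Phi_2^*$ lies in $\mathcal{DI}$ and is feasible for $\lambda\Theta_1+(1-\lambda)\Theta_2$; the triangle inequality and absolute homogeneity of the norm then give the required bound.

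Monotonicity is the main content. Since every free super-operation decomposes into a sequence of the four elemental ones from Definition~\ref{def:freeSup}, it suffices to handle each $\mathcal{E}_{i,\Phi}$ with $\Phi\in\mathcal{DI}$ separately. The strategy is uniform: take an optimizer $\Phi^*$ for $M(\Theta)$ and construct a candidate for the new minimization by applying the same elemental super-operation to $\Phi^*$, namely $\Phi\circ\Phi^*$, $\Phi^*\circ\Phi$, $\Phi^*\otimes\Phi$, or $\Phi\otimes\Phi^*$ respectively. By the closure property, each candidate lies in $\mathcal{DI}$. Using $\Delta\Phi=\Delta\Phi\Delta$ (and $\Delta_{AB}=\Delta_A\otimes\Delta_B$ in the tensor cases) one can pull a $\Delta$ across, factoring the difference as $\Delta\Phi\,(\Delta\Theta-\Delta\Phi^*)$ for $\mathcal{E}_{1,\Phi}$, as $(\Delta\Theta-\Delta\Phi^*)\,\Phi$ for $\mathcal{E}_{2,\Phi}$, and as $(\Delta_A\Theta-\Delta_A\Phi^*)\otimes\Delta_B\Phi$ (or the mirror) for $\mathcal{E}_{3,\Phi},\mathcal{E}_{4,\Phi}$. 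Sub-multiplicativity of the norm---with respect to composition in the first two cases and to tensor products in the last two---combined with the hypothesis $\|\Phi\|\le 1$ for $\Phi\in\mathcal{DI}$ (which also yields $\|\Delta\Phi\|\le 1$ since $\Delta\Phi$ is itself DI) then bounds each expression by $M(\Theta)$.

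The main obstacle is not conceptual but rather bookkeeping: the four elemental cases each require their own application of the DI commutation relation, and the tensor product cases in particular need the factorization $\Delta_{AB}=\Delta_A\otimes\Delta_B$ so that sub-multiplicativity with respect to tensor products can be invoked cleanly. The conceptual backbone, however, is the closure of $\mathcal{DI}$ under sequential and parallel composition, which is what guarantees that the transported optimizers $\Phi^*$ produce admissible feasible points in the minimization defining $M(\mathcal{E}_{i,\Phi}[\Theta])$.
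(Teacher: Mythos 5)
Your proof is correct and follows essentially the same route as the paper's: transport the optimizer $\Phi^*$ through the elemental super-operation, use the relation $\Delta\Phi=\Delta\Phi\Delta$ to factor the difference, and invoke sub-multiplicativity together with $\|\Phi\|\le 1$ on $\mathcal{DI}$. The only cosmetic differences are that you treat all four elemental super-operations directly (the paper first reduces to three via its Proposition on the normal form of free super-operations) and that you spell out the faithfulness and attainment-of-the-minimum details more explicitly than the paper does.
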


Choosing a particular norm in the above proposition, the so-called completely bounded trace norm or diamond norm~\cite{0036-0279-52-6-R02}, we find a measure that can be calculated efficiently.
The diamond norm is based on the trace norm, which is defined for a linear operator $A$ by~\cite{watrous2018theory}
\begin{align}\label{eq:traceNorm}
	\| A\|_1=\tr\left(\sqrt{A^\dagger A}\right).
\end{align}
The induced trace norm on a quantum operation (or more general a super-operator) $\Theta$ is, as the name suggests, defined by
\begin{align}\label{eq:traceNormOp}
	\|\Theta\|_1= \max \{ \|\Theta(X)\|_1: \| X\|_1 \le 1 \}.
\end{align}
Finally, the completely bounded trace norm or diamond norm of a quantum channel is given by
\begin{align}\nonumber
	\|\Theta^{B\leftarrow A} \|_\diamond=\sup_Z \| \Theta^{B\leftarrow A}\otimes\id^Z\|_1 = \| \Theta^{B\leftarrow A}\otimes\id^C\|_1
\end{align}
with $\dim A=\dim C$ and has multiple applications in quantum information~\cite{0036-0279-52-6-R02,PhysRevA.71.062310,watrous2018theory}. With these definitions at hand, we are ready to present our first measure.

\begin{theorem}\label{cor:DiamondMonotone}
	The functional
	\begin{align}
		M_\diamond(\Theta)=\min_{\Phi \in \mathcal{DI}} \|\Delta \Theta -\Delta \Phi \|_\diamond
	\end{align}
	is a measure in the detection-incoherent setting. We call this measure the diamond-measure.
\end{theorem}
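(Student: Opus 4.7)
The plan is to deduce Theorem~\ref{cor:DiamondMonotone} directly from Proposition~\ref{prop:genMonotone} by verifying, for the diamond norm $\|\cdot\|_\diamond$, the three hypotheses appearing there: sub-multiplicativity under composition of super-operators, sub-multiplicativity under tensor products, and the bound $\|\Phi\|_\diamond\le 1$ for every $\Phi\in\mathcal{DI}$. Nothing else is needed, since the proposition already packages faithfulness, monotonicity under free super-operations, and convexity into a single conclusion.

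For the first two properties, I would appeal to standard results on the diamond norm rather than reprove them. Sub-multiplicativity under composition follows from the analogous property of the induced trace norm of Eq.~\ref{eq:traceNormOp} by tensoring with a common ancilla and taking the supremum; sub-multiplicativity under tensor products is in fact an equality, the well-known stabilized/tensor-multiplicative character $\|\Theta_1\otimes\Theta_2\|_\diamond=\|\Theta_1\|_\diamond\|\Theta_2\|_\diamond$. Both are covered by the references on $\|\cdot\|_\diamond$ already cited in the text. The coherence structure enters only in the third hypothesis, which is essentially trivial: every $\Phi\in\mathcal{DI}$ is by definition a quantum operation, hence completely positive and trace preserving, and every CPTP map $\Lambda$ satisfies $\|\Lambda\|_\diamond=1$, so in particular $\|\Phi\|_\diamond=1\le 1$.

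With these three properties in hand, Proposition~\ref{prop:genMonotone} immediately yields that $M_\diamond$ is a measure in the detection-incoherent setting. I do not anticipate any genuine obstacle. The one subtlety worth flagging when unwinding the faithfulness direction hidden in Proposition~\ref{prop:genMonotone} is that the set $\mathcal{DI}$ is closed and bounded in the finite-dimensional space of super-operators and thus compact, so the minimum in the definition of $M_\diamond$ is attained by some $\Phi^\star\in\mathcal{DI}$; from $\Delta\Theta=\Delta\Phi^\star$ and $\Delta\Phi^\star=\Delta\Phi^\star\Delta$ one then recovers $\Delta\Theta=\Delta\Theta\Delta$, i.e.~$\Theta\in\mathcal{DI}$. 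The efficient computability of $M_\diamond$ advertised around the statement is a separate feature (the diamond norm admits an SDP formulation and $\mathcal{DI}$ is a semidefinite-representable set), independent of the proof that $M_\diamond$ is a resource measure.
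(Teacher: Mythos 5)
Your proposal matches the paper's own proof, which likewise derives the theorem from Proposition~\ref{prop:genMonotone} by citing that the diamond norm of a quantum channel equals one and that $\|\cdot\|_\diamond$ is sub-multiplicative both under composition and under tensor products. The additional remark about attainment of the minimum via compactness of $\mathcal{DI}$ is a reasonable point but belongs to the proof of Proposition~\ref{prop:genMonotone} rather than to this theorem.
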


Rather surprisingly, we show in the SM that this measure can be calculated efficiently using a semidefinite program~\cite{boyd2004convex} which is based on \cite{v005a011,CHOI1975285,PhysRevA.87.022310}.
A related measure is given in the following theorem.
\begin{theorem} \label{prop:BiasMon}
	The functional	
	\begin{align}
		\tilde{M}_\diamond(\Theta)=\min_{\Phi\in \mathcal{DI}}\|\Delta \Theta- \Delta\Phi \|_1
	\end{align}
	is a measure in the detection-incoherent setting.
	We call it the nSID-measure (non-stochasticity in detection).
\end{theorem}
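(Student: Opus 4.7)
The plan is to verify the three defining properties of Def.~\ref{def:Monotone} directly. One cannot simply invoke Prop.~\ref{prop:genMonotone} here, because the induced trace norm is in general not sub-multiplicative with respect to tensor products (for instance $\|T\otimes\id_d\|_1=d$ while $\|T\|_1=1$ for the transpose $T$); the proof has to exploit the specific structure of $\Delta\Theta-\Delta\Phi$ at the tensor step.

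Faithfulness and convexity are the easy parts. If $\Theta\in\mathcal{DI}$, the choice $\Phi=\Theta$ attains zero. Conversely, $\tilde M_\diamond(\Theta)=0$ gives (by compactness of $\mathcal{DI}$ and continuity of the norm) some $\Phi\in\mathcal{DI}$ with $\Delta\Theta=\Delta\Phi$; composing with $\Delta$ from the right and using $\Delta\Phi=\Delta\Phi\Delta$ yields $\Delta\Theta=\Delta\Theta\Delta$, i.e.\ $\Theta\in\mathcal{DI}$. For convexity, if $\Theta=\sum_k p_k\Theta_k$ and $\Phi_k\in\mathcal{DI}$ optimize each $\Theta_k$, then $\sum_k p_k\Phi_k\in\mathcal{DI}$ (the set is convex), and the triangle inequality for $\|\cdot\|_1$ gives $\tilde M_\diamond(\Theta)\le\sum_k p_k\tilde M_\diamond(\Theta_k)$.

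Monotonicity reduces to checking the inequality on each of the four elemental free super-operations of Def.~\ref{def:freeSup}. Fix $\Psi\in\mathcal{DI}$. For the composition cases $\mathcal{E}_{1,\Psi}$ and $\mathcal{E}_{2,\Psi}$, I would restrict the minimization to the ansatz $\Phi=\Psi\Phi'$ or $\Phi=\Phi'\Psi$, which remains in $\mathcal{DI}$ since this set is closed under composition. Using the DI identity $\Delta\Psi=\Delta\Psi\Delta$ the objective factors as $\Delta\Psi(\Delta\Theta-\Delta\Phi')$ or $(\Delta\Theta-\Delta\Phi')\Psi$, and submultiplicativity of the induced trace norm under composition, together with $\|\Delta\Psi\|_1=\|\Psi\|_1=1$ (both are channels), gives the required bound. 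The tensor cases $\mathcal{E}_{3,\Psi},\mathcal{E}_{4,\Psi}$ are the main obstacle. Here the ansatz is $\Phi=\Phi'\otimes\Psi$ (again DI), producing $(\Delta\Theta-\Delta\Phi')\otimes\Delta\Psi$ inside the norm. The key lemma is that for any super-operator $\Lambda$, $\|\Lambda\otimes\Delta\Psi\|_1\le\|\Lambda\|_1$, and this is where the classical output of $\Delta\Psi$ is crucial. Writing $\Delta\Psi(Y)=\sum_j\tr(N_j Y)\ketbra{j}{j}$ for the effective POVM $\{N_j\}$, one computes $(\Lambda\otimes\Delta\Psi)(X)=\sum_j\Lambda(Y_j)\otimes\ketbra{j}{j}$ with $Y_j=\tr_{A'}((\id\otimes N_j)X)$; the block-diagonal structure in the classical register gives $\|(\Lambda\otimes\Delta\Psi)(X)\|_1=\sum_j\|\Lambda(Y_j)\|_1\le\|\Lambda\|_1\sum_j\|Y_j\|_1$, and splitting $X=X_+-X_-$ into positive and negative parts together with $\sum_j N_j=\id$ bounds $\sum_j\|Y_j\|_1\le\|X_+\|_1+\|X_-\|_1=\|X\|_1$, completing the argument.
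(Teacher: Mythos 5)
Your proof is correct, and while the skeleton (faithfulness, convexity, elemental super-operations) matches the paper's, the decisive tensor-product step is handled by a genuinely different argument. The paper first reduces, via its Prop.~\ref{prop:Monotone}, to checking only tensoring with the identity channel, writes $\Delta(\Theta\otimes\id)-\Delta(\Phi\otimes\id)=\bigl(\Delta\Theta\otimes\id-\Delta\Phi\otimes\id\bigr)\circ(\id\otimes\Delta)$, and then invokes its Lem.~\ref{lem:evecM} to decompose the eigenvectors of the partially dephased \emph{input} state into product vectors $\ket{\phi_{a|b}}\otimes\ket{b}$, finishing with convexity of the trace norm. You instead prove the norm inequality $\|\Lambda\otimes\Delta\Psi\|_1\le\|\Lambda\|_1$ by exploiting the classical structure of the \emph{output}: $\Delta\Psi$ acts as a POVM $\{N_j\}$ with outcomes written into the incoherent basis, the joint output is block-diagonal in the classical register so the trace norm splits as a sum $\sum_j\|\Lambda(Y_j)\|_1$, and positivity plus $\sum_j N_j=\id$ controls $\sum_j\|Y_j\|_1$. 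The two mechanisms are dual faces of the same fact (a dephased ancilla decouples classically), but your packaging buys a cleaner, reusable lemma that covers tensoring with an arbitrary free $\Psi$ in one stroke and avoids the eigenvector analysis; the paper's route buys economy by only ever needing the $\Psi=\id$ case. One small caveat: your lemma is asserted "for any super-operator $\Lambda$" but the $X=X_+-X_-$ split only treats Hermitian inputs, whereas the induced trace norm is a supremum over all $X$ with $\|X\|_1\le1$. This is harmless here because $\Lambda\otimes\Delta\Psi=\Delta_{\text{composite}}\circ\bigl((\Theta-\Phi')\otimes\Psi\bigr)$ is Hermiticity-preserving with diagonal output, so its induced trace norm is attained on Hermitian (indeed rank-one positive) inputs — the same identification the paper uses implicitly when it rewrites $\|\Delta\Theta-\Delta\Phi\|_1$ as $\max_{\ket{\psi}}\|\Delta(\Theta-\Phi)\ketbra{\psi}{\psi}\|_1$ — but the restriction should be stated.
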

As we prove in the SM, this measure has an operational interpretation in our framework.
Assume you obtain a \emph{single} copy of a quantum channel which is equal to $\Theta_0$ or $\Theta_1$ with probability $1/2$ each. The optimal probability $P_c(1/2,\Theta_0,\Theta_1)$ to correctly guess $i=0,1$ if one can perform only detection-incoherent measurements is given by (see also~\cite{Matthews2009})
\begin{align}
	P_c(1/2,\Theta_0,\Theta_1)=& \frac{1}{2}+ \frac{1}{4} \max_{\ket{\psi}} \|\Delta\left(\Theta_0-\Theta_1\right) \ketbra{\psi}{\psi} \|_1. \nonumber
\end{align}
Therefore, in a single shot regime,
$1/2+1/4\ \tilde{M}_\diamond(\Theta)$
is the optimal probability to guess correctly if one obtained $\Theta$ or the least distinguishable free operation, provided we can use only free measurements. Accordingly, the measure $\tilde{M}_\diamond$ quantifies how well the visible part of an operation can be approximated by a free one. This operational interpretation is the reason for the choice of the name nSID-measure.
Note that a similar interpretation holds for the diamond-measure with the only difference that, on the auxiliary system, non-free measurements  are allowed as well. Therefore the diamond-measure is an upper bound on the nSID-measure.
The operational interpretation of this measure (which satisfies faithfulness) proves that we can distinguish at no cost all operations that can detect coherence from those that cannot. As we argue in the introduction and the SM, this is an important property which cannot be achieved using any coherence theory on the level of states. In the SM, we give details of how this measure can be evaluated and some examples.

Now that we have described a measure with an operational interpretation, a natural question is which quantum operations maximize this measure. The answer is given by the following proposition.
\begin{proposition}\label{prop:MaxMeas}
	The maximum value of $\tilde{M}_\diamond(\Theta)$ for $\Theta$ a quantum channel with input of dimension $n$ and output of dimension $m$ is given by
	\begin{align}
	\frac{2 (N_0-1)}{N_0},
	\end{align}
	where $N_0=\min\{n,m\}$. It is both saturated by a Fourier transform in a subspace of dimension $N_0$ and by a measurement in the Fourier basis, encoding the outcomes in the incoherent basis.
\end{proposition}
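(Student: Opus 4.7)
The plan is to sandwich $\tilde{M}_\diamond(\Theta)$ between a universal upper bound of $2(N_0-1)/N_0$ valid for every channel $\Theta\colon M_n\to M_m$ and matching lower bounds for the two specific examples. The upper bound will come from two canonical detection-incoherent candidates, picked according to which of the two dimensions is smaller; the lower bound will exploit the observation that any $\Phi\in\mathcal{DI}$ must act identically on any pair of inputs sharing the same diagonal, so on a family of maximally coherent states it can only yield a single fixed output distribution, which the Fourier examples are then free to avoid maximally.

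For the upper bound I would first verify that $\Phi_1:=\Theta\circ\Delta$ lies in $\mathcal{DI}$, which is immediate from $\Delta\Phi_1=\Delta\Theta\Delta=\Delta\Phi_1\Delta$, and use $\Delta\Theta-\Delta\Phi_1=(\Delta\Theta)\circ(\id-\Delta)$ together with the fact that $\Delta\Theta$ is trace-preserving completely positive and hence non-expansive in the induced trace norm on Hermitian inputs. What remains is the pointwise estimate $\|\id-\Delta\|_1\le 2(n-1)/n$, which I would establish via the Heisenberg--Weyl expansion
\begin{align}
\Delta(X)=\frac{1}{n}\sum_{k=0}^{n-1} Z^k X Z^{-k},\qquad Z=\diag(1,\omega,\ldots,\omega^{n-1}),\ \omega=e^{2\pi i/n}.
\end{align}
Writing $\id-\Delta=\frac{1}{n}\sum_{k=1}^{n-1}(\id-Z^k(\cdot)Z^{-k})$ and applying the triangle inequality (each summand having induced norm at most $2$) then yields the claim. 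The output-side analog takes $\Phi_2(\rho):=\tr(\rho)\,\id/m$, again in $\mathcal{DI}$, and the bound $\|\Delta\Theta-\Delta\Phi_2\|_1\le 2(m-1)/m$ follows from the same cyclic averaging applied to the $m$-element shift on probability distributions. The minimum of the two estimates is $2(N_0-1)/N_0$.

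For the lower bound on the Fourier transform $\Theta(\rho)=F\rho F^\dagger$ of dimension $N_0$, I would probe with $\ket{\psi_\ell}:=F\ket{\ell}$, $\ell=0,\ldots,N_0-1$: each is maximally coherent, so $\Delta\ketbra{\psi_\ell}{\psi_\ell}=\id/N_0$ regardless of $\ell$. Consequently, for every $\Phi\in\mathcal{DI}$,
\begin{align}
\Delta\Phi(\ketbra{\psi_\ell}{\psi_\ell})=\Delta\Phi\Delta(\ketbra{\psi_\ell}{\psi_\ell})=\Delta\Phi(\id/N_0)=:q
\end{align}
is a single $\ell$-independent probability distribution, while $\Delta\Theta(\ketbra{\psi_\ell}{\psi_\ell})=F^2\ketbra{\ell}{\ell}F^{\dagger 2}=\ketbra{-\ell}{-\ell}$ (indices mod $N_0$) is a pure incoherent state that varies with $\ell$. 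Choosing $\ell$ so that $q_{-\ell}=\min_k q_k\le 1/N_0$ then gives $\|\Delta\Theta-\Delta\Phi\|_1\ge 2(1-1/N_0)$ uniformly in $\Phi$, matching the upper bound. The Fourier-basis measurement channel is handled by the mirror argument: its canonical probes are the Fourier basis states $\ket{\phi_k}=F\ket{k}$, which are likewise maximally coherent in the incoherent basis, so detection-incoherent $\Phi$ collapses them to a fixed $q$ while $\Delta\Theta(\ketbra{\phi_k}{\phi_k})=\ketbra{k}{k}$ sweeps through all pure incoherent outputs.

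The main obstacle I anticipate is the tight estimate $\|\id-\Delta\|_1\le 2(n-1)/n$ (and its output twin): a naive Hermitian decomposition would introduce a slack factor of $2$ that ruins tightness, so the Heisenberg--Weyl averaging of $\Delta$ is the decisive ingredient. A secondary subtlety is specifying ``Fourier transform in a subspace of dimension $N_0$'' when $n\ne m$: I would interpret it as applying the $N_0$-dimensional Fourier unitary (respectively POVM) to an $N_0$-dimensional subspace of the larger side and completing the channel by a fixed incoherent action on the orthogonal complement, which leaves the upper-bound argument and the probe states above intact.
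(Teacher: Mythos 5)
Your proof is correct, and its overall architecture (case split on $\min\{n,m\}$ for the upper bound, maximally coherent probe states for saturation) matches the paper's. The saturation half is essentially identical: you probe with the states that the Fourier unitary (resp.\ Fourier POVM) sends to distinct incoherent outputs; every $\Phi\in\mathcal{DI}$ collapses all of them to a single fixed distribution $q$, and since $q$ assigns at most $1/N_0$ to one of the $N_0$ relevant outcomes, some probe already witnesses distance $2(N_0-1)/N_0$ — your version actually makes the paper's terse ``identity superoperator'' sentence precise. The upper bound diverges in the input-limited case $n\le m$: the paper composes with the depolarizing channel to obtain the constant competitor $\rho\mapsto\tr(\rho)\,\Theta(\id_n/n)$, which reduces everything to $\max_\rho\|\rho-\id_n/n\|_1=2(n-1)/n$, a one-line eigenvalue computation because $\rho$ and $\id_n/n$ commute. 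You instead take the competitor $\Theta\circ\Delta$ and therefore need the sharp estimate $\|(\id-\Delta)\rho\|_1\le 2(n-1)/n$, which your Heisenberg--Weyl twirl $\Delta=\tfrac1n\sum_k Z^k(\cdot)Z^{-k}$ delivers correctly (you are right that a naive Hermitian split would only give $2$). Both competitors are legitimate; the paper's choice avoids the twirl lemma entirely, while yours keeps the free competitor closer to $\Theta$ and isolates a reusable norm bound on the dephasing defect. Your subspace-embedding convention for $n\ne m$ is consistent with what the statement requires and does not disturb either half of the argument.
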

For transformations on qubits, this means that the Hadamard gate is best suited to detect coherence in the sense of the nSID-measure. This can be seen as a reason why for example the Deutsch–Jozsa algorithm~\cite{deutsch1985quantum,deutsch1992rapid} not only starts but also finishes with Hadamard gates. It is not enough to create coherence, it also has to be detected, i.e. used, in order to exploit it.

{\em Conclusions.} --
We argued why the formulation of resource theories on the level of operations are a
valuable unifying concept and demonstrated at the example of coherence theory how to construct them rigorously using resource destroying maps~\cite{PhysRevLett.118.060502}. These theories are based on two main ingredients, the free operations and the free super-operations. The free super-operations are sequential and parallel concatenations with free operations, i.e. the embedding into a network of free operations. Based on physical considerations, we defined properties that a measure of resource in an operation should obey, for example monotonicity under the free super-operations.
We focused particularly on the question how well a quantum operation can detect coherence. This is important, since both the ability to create and to detect coherence are necessary prerequisites for operational advantages of quantum computation over classical computation and the latter cannot, as we have shown, be addressed using resource theories on the level of states.
We presented two measures quantifying the ability of an operation to detect coherence. The first can be calculated efficiently using a semidefinite program. The second, named the nSID-measure, can be evaluated in an iterative manner and has a clear operational interpretation. Its value determines how well we can distinguish the given quantum operation from the free operations in a single try.
Finally, we proved that Fourier transforms and measurements in a Fourier basis maximize the nSID-measure and can therefore be considered optimal in the task of measuring coherence.

Completion of the resource theories provided here is a sizable task. It includes the question of manipulation, quantification, and exploitation of the resourceful operations using free super-operations. A thorough answer to these questions may lead to a better understanding of operational advantages provided by quantum devices, which in turn may lead to improved designs. Working out our approach in scenarios different from coherence theory will shed new light on other quantum properties.

\begin{acknowledgments}
	We acknowledge helpful discussions with Andrea Smirne, Joachim Rosskopf, Feixiang Xu, and Huichao Xu. TT, DE, and MBP acknowledge financial support by the ERC Synergy Grant BioQ (grant no 319130). LZ is grateful to the financial support from National Natural Science Foundation of China under Grants No. 11690032 and No. 11474159.
\end{acknowledgments}

%

\newpage
\clearpage

\onecolumngrid
\section*{Supplemental Material: Quantifying Operations with an Application to Coherence}

In this Supplemental Material, we give the proofs of the results in the main text and some further details and examples.

\section{Simplifying results}
Here we present some results that will simplify the proofs of the results in the main text. For completeness, we first proof the following lemma, which is basically clear by definition.
\setcounter{theorem}{10}
\begin{lemma}\label{lem:incComp}
	If both $\Phi$ and $\Theta$ are free, $\Phi\circ\Theta$ and $\Phi \otimes \Theta$ are free as well.
\end{lemma}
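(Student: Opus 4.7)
The plan is to verify the two closure statements (composition and tensor product) separately for each of the three notions of freeness (DI, CI, DCI) introduced in Defs.~\ref{def:minc} and \ref{def:creaIncInc}. Each case reduces to direct algebraic manipulation of the defining identity involving $\Delta$, so no new ideas are needed beyond the definitions themselves.

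For sequential composition, I would start from the detection-incoherent case. Using $\Delta\Phi=\Delta\Phi\Delta$ and $\Delta\Theta=\Delta\Theta\Delta$, the chain $\Delta(\Phi\Theta)=\Delta\Phi\Theta=\Delta\Phi\Delta\Theta=\Delta\Phi\Delta\Theta\Delta=\Delta(\Phi\Theta)\Delta$ yields the DI condition for the composite. The creation-incoherent case is the mirror image: from $\Phi\Delta=\Delta\Phi\Delta$ (and likewise for $\Theta$) one obtains $\Phi\Theta\Delta=\Phi\Delta\Theta\Delta=\Delta\Phi\Delta\Theta\Delta=\Delta\Phi\Theta\Delta$. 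For DCI the condition $\Delta\Phi=\Phi\Delta$ lets one commute $\Delta$ through each factor independently, giving $\Delta\Phi\Theta=\Phi\Delta\Theta=\Phi\Theta\Delta$ in one line.

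For tensor products the key observation is that, as stated in the \emph{Basic framework} section, the total dephasing on a composite system factorises as $\Delta^{AB}=\Delta^A\otimes\Delta^B$, together with the standard identity $(X\otimes Y)(X'\otimes Y')=(XX')\otimes(YY')$. Each of the three freeness conditions can then be checked tensor-factor by tensor-factor. For instance, in the DI case, $\Delta(\Phi\otimes\Theta)=(\Delta\Phi)\otimes(\Delta\Theta)=(\Delta\Phi\Delta)\otimes(\Delta\Theta\Delta)=\Delta(\Phi\otimes\Theta)\Delta$, and the CI and DCI cases follow analogously by applying the corresponding one-sided identity in each factor.

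I do not anticipate any real obstacle; as the authors note, the statement is basically clear by definition. The only bookkeeping is keeping track of which subsystem each $\Delta$ acts on when dealing with tensor products, but this is handled automatically by the factorisation of the composite dephasing.
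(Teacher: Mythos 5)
Your proposal is correct and follows essentially the same route as the paper: the DI and CI cases for both composition and tensor product are verified by exactly the same one-line chains of insertions of $\Delta$ using the defining identities, together with the factorisation $\Delta^{AB}=\Delta^A\otimes\Delta^B$. The only difference is cosmetic — you additionally spell out the DCI case, which the paper leaves implicit.
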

\begin{proof}
	\begin{align}
		&\Phi_\text{c-inc} \Theta_\text{c-inc} \Delta = \Phi_\text{c-inc} \Delta \Theta_\text{c-inc} \Delta = \Delta \Phi_\text{c-inc}\Delta \Theta_\text{c-inc} \Delta = \Delta \Phi_\text{c-inc} \Theta_\text{c-inc} \Delta,\nonumber \\
		&\Delta \Phi_\text{ d-inc} \Theta_\text{ d-inc}  =  \Delta\Phi_\text{ d-inc} \Delta \Theta_\text{ d-inc}  = \Delta \Phi_\text{ d-inc} \Delta \Theta_\text{ d-inc} \Delta = \Delta \Phi_\text{ d-inc} \Theta_\text{ d-inc} \Delta, \nonumber \\
		&\left( \Phi_\text{c-inc} \otimes \Theta_\text{c-inc} \right)\Delta = \left( \Phi_\text{c-inc} \Delta \right)  \otimes \left(\Theta_\text{c-inc} \Delta\right)= \left(\Delta \Phi_\text{c-inc} \Delta \right)  \otimes \left(\Delta \Theta_\text{c-inc} \Delta\right) =\Delta \left( \Phi_\text{c-inc} \otimes \Theta_\text{c-inc} \right)\Delta,\nonumber \\
		&\Delta\left( \Phi_\text{d-inc} \otimes \Theta_\text{d-inc} \right) = \left(\Delta \Phi_\text{d-inc}  \right)  \otimes \left( \Delta \Theta_\text{d-inc} \right)= \left(\Delta \Phi_\text{d-inc} \Delta \right)  \otimes \left(\Delta \Theta_\text{d-inc} \Delta\right) =\Delta \left( \Phi_\text{d-inc} \otimes \Theta_\text{d-inc} \right)\Delta.
	\end{align}
\end{proof}

This shows that our free super-operations indeed preserve the set of free operations. In addition, it allows us to give the following simplified characterization of the free super-operations (see also \cite{chiribella2008transforming} and lemma 3.11 in~\cite{COECKE201659}).
\begin{proposition} \label{prop:FormSuperoperator}
	A super-operation  $\mathcal{F}$ is
	free if and only if it can be written as
	\begin{align}
		\mathcal{F}\left[\Theta\right]=\mathcal{F}_{\Phi_2,\Phi_1}\left[\Theta\right]:=&\Phi_2 \circ \left(\Theta\otimes\id\right)\circ\Phi_1
	\end{align}
	where $\Phi_1$ and $\Phi_2$ are free.
\end{proposition}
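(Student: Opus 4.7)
The plan is to prove the two directions separately, with the bulk of the work going into the ``only if'' direction by induction on the length of the elemental sequence.

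For the ``if'' direction, I would observe first that the identity $\id$ is free in each of the three resource theories (it commutes trivially with $\Delta$ on any side). Then the map $\Theta\mapsto \Phi_2\circ(\Theta\otimes\id)\circ\Phi_1$ is obtained as the concatenation $\mathcal{E}_{1,\Phi_2}\,\mathcal{E}_{2,\Phi_1}\,\mathcal{E}_{3,\id}$ applied to $\Theta$, so it is free by Def.~\ref{def:freeSup}. (If the ancilla is one-dimensional, one simply omits the $\mathcal{E}_{3,\id}$ step.)

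For the ``only if'' direction I would proceed by induction on the number $n$ of elemental super-operations in the decomposition $\mathcal{F}=\mathcal{E}_{i_n,\Phi_n}\cdots\mathcal{E}_{i_1,\Phi_1}$. For $n=1$, a direct inspection shows each of the four elemental forms fits the target form: $\mathcal{E}_{1,\Phi}=\mathcal{F}_{\Phi,\id}$, $\mathcal{E}_{2,\Phi}=\mathcal{F}_{\id,\Phi}$, $\mathcal{E}_{3,\Phi}=\mathcal{F}_{\id\otimes\Phi,\id}$ (after making the trivial identification $\Theta\otimes\Phi=(\id\otimes\Phi)\circ(\Theta\otimes\id)$), and $\mathcal{E}_{4,\Phi}$ similarly, where the required free operations on the right-hand side are free by Lemma~\ref{lem:incComp} combined with the freeness of $\id$. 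For the inductive step, assume $\mathcal{F}[\Theta]=\Phi_2\circ(\Theta\otimes\id_E)\circ\Phi_1$ with $\Phi_1,\Phi_2$ free, and consider applying any one further elemental super-operation. The sequential cases are immediate: pre- or post-composition with a free $\Phi$ is absorbed into $\Phi_2$ or $\Phi_1$ respectively, using Lemma~\ref{lem:incComp}. For the parallel cases, e.g.\ $\Theta\mapsto \mathcal{F}[\Theta]\otimes\Phi$ with $\Phi:D\to D'$, I would enlarge the ancilla from $E$ to $E\otimes D$ and rewrite
\begin{equation*}
\mathcal{F}[\Theta]\otimes\Phi=(\Phi_2\otimes\Phi)\circ\bigl(\Theta\otimes\id_{E\otimes D}\bigr)\circ(\Phi_1\otimes\id_D),
\end{equation*}
where $\Phi_1\otimes\id_D$ and $\Phi_2\otimes\Phi$ are free by Lemma~\ref{lem:incComp}.

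The main technical obstacle is the $\mathcal{E}_{4,\Phi}$ case, where $\Theta$ appears on the right of the tensor product rather than the left. To handle this I would use that the swap operation $\mathrm{SWAP}$ between two subsystems of matching dimensions is free in each of the three settings (it permutes incoherent product basis states and hence commutes with $\Delta$). Using $\Phi\otimes\mathcal{F}[\Theta]=\mathrm{SWAP}\circ(\mathcal{F}[\Theta]\otimes\Phi)\circ\mathrm{SWAP}$ reduces the left-tensor case to the right-tensor case just handled, and the sequential-composition rules already established finish the argument. This completes the induction and establishes that every free super-operation has the claimed form.
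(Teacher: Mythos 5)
Your proof is correct and takes essentially the same route as the paper's: the ``if'' direction via the decomposition $\mathcal{E}_{1,\Phi_2}\circ\mathcal{E}_{2,\Phi_1}\circ\mathcal{E}_{3,\id}$, and the ``only if'' direction by casting each elemental super-operation into the standard form $\Phi_2\circ(\Theta\otimes\id)\circ\Phi_1$ (using the free swap to handle $\mathcal{E}_{4,\Phi}$) and then showing this form is preserved under further concatenation by absorbing free operations into $\Phi_1,\Phi_2$ with an enlarged ancilla. The only organizational difference is that the paper states the closure step as a general composition rule $\mathcal{F}_{\Phi_4,\Phi_3}\circ\mathcal{F}_{\Phi_2,\Phi_1}=\mathcal{F}_{\Phi_4\circ(\Phi_2\otimes\id),(\Phi_1\otimes\id)\circ\Phi_3}$ rather than an explicit induction over elemental steps, but the content is the same.
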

\begin{proof}
	First we note that
	\begin{align}
		\mathcal{F}_{\Phi_2,\Phi_1}\left[\Theta\right]=\mathcal{E}_{1,\Phi_2}\circ \mathcal{E}_{2,\Phi_1} \circ \mathcal{E}_{3,\id} \left[\Theta\right]
	\end{align}
	is always free by definition. Defining the (in all three frameworks) free operation
	\begin{align}
		\Phi_S(\rho\otimes \sigma)= \sigma \otimes \rho,
	\end{align}
	we can write all elemental free operations in this form,
	\begin{align}
		\mathcal{E}_{1,\Phi}\left[\Theta\right]= &\Phi \circ \Theta= \Phi \circ \left(\Theta \otimes 1\right)\circ \id = \mathcal{F}_{\Phi,\id}\left[\Theta\right], \nonumber \\
		\mathcal{E}_{2,\Phi}\left[\Theta\right]= & \Theta \circ \Phi= \id \circ \left(\Theta \otimes 1\right)\circ \Phi = \mathcal{F}_{\id,\Phi}\left[\Theta\right], \nonumber \\
		\mathcal{E}_{3,\Phi}\left[\Theta\right]= &\Theta \otimes \Phi= \left(\Theta \otimes \id\right)\circ \left(\id \otimes\Phi\right)=\id \circ \left(\Theta \otimes \id\right)\circ \left(\id \otimes\Phi\right) = \mathcal{F}_{\id,\id\otimes\Phi}\left[\Theta\right], \nonumber \\
		\mathcal{E}_{4,\Phi}\left[\Theta\right]= & \Phi \otimes \Theta = \Phi_S\circ \left(\Theta\otimes \Phi \right) \circ  \Phi_S = \Phi_S  \circ \left(\Theta \otimes \id\right)\circ \left(\id \otimes\Phi\right) \circ \Phi_S = \mathcal{F}_{\Phi_S,(\id\otimes\Phi)\circ \Phi_S}\left[\Theta\right].
	\end{align}
	Therefore
	\begin{align}
		\mathcal{F}_{\Phi_4,\Phi_3} \circ \mathcal{F}_{\Phi_2,\Phi_1}\left[\Theta\right]=& \Phi_4 \circ \left[ \left(\Phi_2\circ \left(\Theta \otimes \id \right)\circ\Phi_1 \right)\otimes \id \right]\circ \Phi_3 \nonumber \\
		=& \Phi_4 \circ \left[ \left(\Phi_2\otimes \id \right)\circ \left(\left(\left(\Theta \otimes \id \right)\circ\Phi_1 \right)\otimes \id \right) \right]\circ \Phi_3 \nonumber \\
		=& \Phi_4 \circ \left(\Phi_2\otimes \id \right)\circ\left[  \left(\left(\Theta \otimes \id \right)\otimes\id \right)\circ\left(\Phi_1 \otimes \id \right) \right]\circ \Phi_3 \nonumber \\
		=& \Phi_4 \circ \left(\Phi_2\otimes \id \right)\circ\left[  \Theta \otimes \id \otimes\id  \right] \circ\left(\Phi_1 \otimes \id \right) \circ \Phi_3 \nonumber \\
		=& \Phi_4 \circ \left(\Phi_2\otimes \id \right)\circ\left[  \Theta \otimes \id   \right] \circ\left(\Phi_1 \otimes \id \right) \circ \Phi_3 \nonumber \\
		=&\mathcal{F}_{\Phi_4\circ \left(\Phi_2\otimes\id\right),\left(\Phi_1\otimes\id\right)\circ\Phi_3}\left[\Theta\right]
	\end{align}
	finishes the proof.
\end{proof}
As shown in the following proposition, this simplifies the defining properties of a measure.
\begin{proposition}\label{prop:Monotone}
	$M$ is a measure of resource of operations iff
	\begin{alignat}{2}
		&M\left(\Theta\right)=0 \Leftrightarrow \Theta \text{ free},&& \nonumber \\
		&M\left(\Theta\right)\ge M\left(\mathcal{E}_{1,\Phi}\left[\Theta\right]\right) &&\forall \Theta, \forall\Phi \text{ free},\nonumber \\
		&M\left(\Theta\right)\ge M\left(\mathcal{E}_{2,\Phi}\left[\Theta\right]\right) &&\forall \Theta, \forall\Phi \text{ free},\nonumber \\
		&M\left(\Theta\right)\ge
		M\left(\mathcal{E}_{3,\id}\left[\Theta\right]\right) &&\forall \Theta, \nonumber \\
		&M\left(\Theta\right) \text{ is convex}.
	\end{alignat}
\end{proposition}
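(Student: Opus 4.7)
The plan is to prove the two directions of the equivalence separately, leveraging the characterization of free super-operations given in Proposition~\ref{prop:FormSuperoperator}. The ``only if'' direction is immediate: the maps $\mathcal{E}_{1,\Phi}$, $\mathcal{E}_{2,\Phi}$, and $\mathcal{E}_{3,\id}$ are themselves elemental free super-operations by Definition~\ref{def:freeSup} (with $\Phi$ free, and $\id$ being trivially free), so if $M$ is monotone under every free super-operation, it is in particular monotone under these three. Faithfulness and convexity are just copied over, since they appear verbatim in Definition~\ref{def:Monotone}.

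For the nontrivial ``if'' direction, the idea is to use the factorization
\begin{align*}
\mathcal{F}_{\Phi_2,\Phi_1}[\Theta] = \Phi_2 \circ (\Theta \otimes \id) \circ \Phi_1 = \mathcal{E}_{1,\Phi_2}\!\left[\, \mathcal{E}_{2,\Phi_1}\!\left[\, \mathcal{E}_{3,\id}[\Theta]\, \right] \right]
\end{align*}
from the proof of Proposition~\ref{prop:FormSuperoperator}, and then apply the three assumed monotonicity properties in sequence. Concretely, set $\Theta_0 = \Theta$, $\Theta_1 = \mathcal{E}_{3,\id}[\Theta_0] = \Theta \otimes \id$, $\Theta_2 = \mathcal{E}_{2,\Phi_1}[\Theta_1] = (\Theta\otimes\id)\circ\Phi_1$, and $\Theta_3 = \mathcal{E}_{1,\Phi_2}[\Theta_2] = \Phi_2\circ(\Theta\otimes\id)\circ\Phi_1 = \mathcal{F}_{\Phi_2,\Phi_1}[\Theta]$. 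Each of the three hypothesized inequalities applies (crucially, the second and third hold for \emph{every} operation input, in particular for $\Theta_1$ and $\Theta_2$), yielding $M(\Theta)\ge M(\Theta_1)\ge M(\Theta_2)\ge M(\Theta_3)$. Since by Proposition~\ref{prop:FormSuperoperator} every free super-operation $\mathcal{F}$ has the form $\mathcal{F}_{\Phi_2,\Phi_1}$ for some free $\Phi_1,\Phi_2$, this establishes monotonicity under all free super-operations.

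I do not expect a real obstacle here: the whole point of Proposition~\ref{prop:FormSuperoperator} is to reduce the combinatorial zoo of sequential and parallel concatenations to a single canonical three-step form, after which the reduction to elemental monotonicities is just a telescoping argument. The only subtlety to flag is that one must verify the intermediate operations $\Theta_1$ and $\Theta_2$ lie in the domain to which the assumed monotonicities apply --- but since those hypotheses are stated as $\forall\Theta$, this is automatic. Convexity and faithfulness transfer trivially, completing the proof.
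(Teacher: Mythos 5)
Your proof is correct and follows essentially the same route as the paper: both use Proposition~\ref{prop:FormSuperoperator} to write any free super-operation as $\mathcal{E}_{1,\Phi_2}\circ\mathcal{E}_{2,\Phi_1}\circ\mathcal{E}_{3,\id}$ and then telescope the three assumed monotonicities. No issues.
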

\begin{proof}
	Assume $M$ is a measure. Then, by definitions, the conditions hold. Now assume the conditions hold. Using Prop.~\ref{prop:FormSuperoperator}, we can write
	\begin{align}
		M\left(\mathcal{F}\left[\Theta\right]\right)=M\left(\mathcal{E}_{1,\Phi_2}\circ \mathcal{E}_{2,\Phi_1} \circ \mathcal{E}_{3,\id} \left[\Theta\right]\right) \le M\left(\mathcal{E}_{2,\Phi_1} \circ \mathcal{E}_{3,\id} \left[\Theta\right]\right) \le M\left(\mathcal{E}_{3,\id} \left[\Theta\right]\right) \le M\left( \Theta\right).
	\end{align}
\end{proof}

Next we have the following lemma, which we will use frequently.
\begin{lemma} \label{lem:evecM}
	The eigenvectors of $\id^A \otimes \Delta^B (\rho^{A,B})$ are separable and of the form $\ket{\phi_{a|b}}^A \otimes \ket{b}^B$.
\end{lemma}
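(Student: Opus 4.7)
The plan is to decompose $\rho^{AB}$ with respect to the incoherent basis on $B$, exploit the block-diagonal structure that $\id^A \otimes \Delta^B$ produces, and spectrally decompose each block individually.

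First I would write any bipartite operator as
\begin{equation}
\rho^{AB} = \sum_{b,b'} \sigma_{b,b'}^A \otimes \ketbra{b}{b'}^B,
\end{equation}
for some operators $\sigma_{b,b'}^A$ on system $A$. Applying $\id^A \otimes \Delta^B$ annihilates all cross terms $b\neq b'$ (by definition of the total dephasing map in Eq.~\ref{eq:Delta}), leaving the block-diagonal expression
\begin{equation}
(\id^A \otimes \Delta^B)(\rho^{AB}) = \sum_b \sigma_{b,b}^A \otimes \ketbra{b}{b}^B.
\end{equation}

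Next I would note that $\rho^{AB}\ge 0$ forces each diagonal block $\sigma_{b,b}^A = \bra{b}^B \rho^{AB} \ket{b}^B$ to be positive semidefinite on $A$. Hence it admits a spectral decomposition $\sigma_{b,b}^A = \sum_a \lambda_{a|b}\ketbra{\phi_{a|b}}{\phi_{a|b}}^A$ with $\{\ket{\phi_{a|b}}\}_a$ an orthonormal basis of $\mathcal{H}^A$. Substituting back gives
\begin{equation}
(\id^A \otimes \Delta^B)(\rho^{AB}) = \sum_{a,b} \lambda_{a|b}\, \bigl(\ket{\phi_{a|b}}^A\otimes\ket{b}^B\bigr)\bigl(\bra{\phi_{a|b}}^A\otimes\bra{b}^B\bigr),
\end{equation}
which is manifestly a spectral decomposition whose eigenvectors have the claimed product form.

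The only minor subtlety is that the eigenvalues $\lambda_{a|b}$ may coincide across different values of $b$, so the eigenbasis is not unique in general; the statement should then be read as asserting the \emph{existence} of a product eigenbasis, which the construction above delivers explicitly. I do not foresee any real obstacle: the content is essentially the observation that $\id^A \otimes \Delta^B$ renders the output block-diagonal with blocks indexed by the incoherent basis of $B$, and separable eigenvectors follow automatically.
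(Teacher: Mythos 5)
Your proof is correct and follows essentially the same route as the paper's: both exploit that $\id^A\otimes\Delta^B$ renders the state block-diagonal in the incoherent basis of $B$, with your blocks $\sigma_{b,b}^A$ being (up to normalization by $p_b$) exactly the conditional states $\rho_{|b}$ that the paper diagonalizes, so the resulting product eigenbasis is identical. Your remark that degeneracies across different $b$ mean the lemma should be read as asserting the \emph{existence} of a product eigenbasis is a caveat the paper's phrasing (``all eigenvectors are of the form\ldots'') glosses over, and is worth retaining.
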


\begin{proof}
	Define
	\begin{align}
		\rho^{A,B} =\sum_{i,j,k,l} \rho_{ijkl} \ketbra{i,j}{k,l}.
	\end{align}
	If we do a projective measurement $\{\ket{i}\bra{i}\}$ on system $B$ with outcome $b$, the post-measurement state of system $A$ is given by
	\begin{align}
		\rho_{|b}=\sum_{i,k} \rho_{ibkb}/p_b \ketbra{i}{k}=\sum_a q_{a|b} \ketbra{\phi_{a|b}}{\phi_{a|b}}
	\end{align}
	where the right side is its eigendecomposition. We define the orthonormal set of states
	\begin{align}
		\ket{\psi_{a,b}}=\ket{\phi_{a|b}}^A\otimes\ket{b}^B.
	\end{align}
	Then
	\begin{align}
		\id^A \otimes \Delta^B (\rho^{A,B}) \ket{\psi_{a,b}}=& \sum_{i,j,k} \rho_{ijkj}\ket{i,j}\braket{k,j|\phi_{a|b},b} \nonumber\\
		=& \sum_{i,j} q_{i|j} p_j \ket{\phi_{i|j},j}\braket{\phi_{i|j},j|\phi_{a|b},b} \nonumber \\
		=&q_{a|b} p_b\ket{\psi_{a,b}}.
	\end{align}
	Thus all eigenvectors of $\id^A \otimes \Delta^B (\rho^{A,B})$ are of the form $\ket{\phi_{a|b}}^A\otimes\ket{b}^B$.
\end{proof}

In general, we can check directly if a quantum operation is free: Every quantum operation $\Phi$ is linear and thus completely determined by the coefficients $\Phi_{k,l}^{i,j}$ defined through
\begin{align}
	\Phi\left(\ket{i}\bra{j}\right)=\sum_{k,l} \Phi_{k,l}^{i,j}\ketbra{k}{l}.
\end{align}
\begin{proposition}\label{prop:caracFreeOp}
	Let us represent a linear map $\Phi(\rho)$
	by the coefficients $\Phi_{a,c}^{b,d}$ as above.
	Then $\Phi$ is completely positive iff
	\begin{align}
		\Phi_{a,c}^{b,d}= \sum_n K_{n_{a,b}} K_{n_{c,d}}^*.
	\end{align}
	Under this condition,
	
	$\Phi$ is a detection-incoherent quantum operation iff
	\begin{align} \label{m-inc cara}
		\Phi_{a,a}^{b,d}=p\left(a|b\right)\delta_{b,d} \forall a,b,d.
	\end{align}
	
	$\Phi$ is a creation-incoherent quantum operation iff
	\begin{align}
		\Phi_{b,c}^{a,a}=p\left(b|a\right)\delta_{b,c} \wedge \sum_a  \Phi_{a,a}^{b,d}=\delta_{b,d}. \forall a,b,d.
	\end{align}
	
	$\Phi$ is a detection-creation-incoherent operation iff
	\begin{align}
		\Phi_{a,a}^{b,d}=p\left(a|b\right)\delta_{b,d} \forall a,b,d, \nonumber \\
		\Phi_{b,c}^{a,a}=p\left(b|a\right)\delta_{b,c} \forall a,b,d.
	\end{align}
\end{proposition}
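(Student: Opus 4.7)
\textit{Proof proposal.} The plan is to reduce every claim to a direct translation between operator identities and constraints on the coefficients $\Phi_{a,c}^{b,d}$. The preliminary observation I would record is that, by linearity,
\begin{align}
\Delta\Phi(\ketbra{b}{d})&=\sum_k \Phi_{k,k}^{b,d}\ketbra{k}{k}, \nonumber \\
\tr\Phi(\ketbra{b}{d})&=\sum_a \Phi_{a,a}^{b,d}, \nonumber
\end{align}
so that every operator equality below becomes a statement about the $\Phi_{a,c}^{b,d}$.

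For the complete-positivity part I would invoke the Kraus decomposition $\Phi(\rho)=\sum_n K_n\rho K_n^\dagger$ and expand each Kraus operator in the incoherent basis as $K_n=\sum_{a,b} K_{n_{a,b}}\ketbra{a}{b}$. A direct computation of $K_n\ketbra{b}{d}K_n^\dagger$ and comparison of coefficients yields $\Phi_{a,c}^{b,d}=\sum_n K_{n_{a,b}}K_{n_{c,d}}^*$; the converse follows by reading off Kraus operators from such a decomposition.

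For detection-incoherent $\Phi$ I would rewrite $\Delta\Phi=\Delta\Phi\Delta$ as $\Delta\Phi(X-\Delta X)=0$ for every operator $X$. Since $X-\Delta X$ ranges over all zero-diagonal operators, i.e.\ linear combinations of $\ketbra{b}{d}$ with $b\neq d$, the condition is equivalent to $\Phi_{a,a}^{b,d}=0$ for $b\neq d$. For $b=d$, trace preservation forces $\sum_a \Phi_{a,a}^{b,b}=1$ and complete positivity forces $\Phi_{a,a}^{b,b}\geq 0$, so $p(a|b):=\Phi_{a,a}^{b,b}$ is a probability distribution, and the converse is immediate by substitution. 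For creation-incoherent $\Phi$, the condition $\Phi\Delta=\Delta\Phi\Delta$ says that $\Phi(\ketbra{b}{b})$ is diagonal for every $b$, which translates to $\Phi_{b,c}^{a,a}=p(b|a)\delta_{b,c}$; since this leaves $\Phi_{a,a}^{b,d}$ with $b\neq d$ unconstrained, trace preservation on off-diagonal inputs has to be listed separately as $\sum_a \Phi_{a,a}^{b,d}=\delta_{b,d}$. The detection-creation-incoherent case then follows by splitting $\Delta\Phi=\Phi\Delta$ on diagonal and off-diagonal inputs and combining the previous two analyses.

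The main obstacle is not mathematical but bookkeeping: I have to track carefully which coefficient constraints come from the coherence condition, which from trace preservation, and which from complete positivity, so that the final lists are neither redundant nor missing a condition. The easiest slip is the asymmetry between the three cases, namely that trace preservation must be imposed explicitly only in the creation-incoherent characterization, whereas in the other two it follows for free from the coherence condition itself.
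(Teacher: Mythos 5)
Your proposal is correct and follows the same overall architecture as the paper's proof: Kraus decomposition for complete positivity, then a direct translation of each dephasing identity into constraints on the coefficients, with the same bookkeeping about where trace preservation must be imposed separately (only in the creation-incoherent case, since $\Phi_{a,a}^{b,d}=p(a|b)\delta_{b,d}$ already implies $\sum_a\Phi_{a,a}^{b,d}=\delta_{b,d}$). The one place you genuinely diverge is the extraction of the necessary condition $\Phi_{a,a}^{b,d}=0$ for $b\neq d$ in the detection-incoherent case: you argue by linearity that $X-\Delta X$ spans the zero-diagonal operators, so $\Delta\Phi$ must annihilate every $\ketbra{b}{d}$ with $b\neq d$, whereas the paper probes the identity only with pure states $\ket{\psi}=\sum_j\sqrt{q_j}e^{i\gamma_j}\ket{j}$ and kills the off-diagonal coefficients by varying the phases $\gamma_b-\gamma_d$ and taking real parts. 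Your route is shorter and cleaner; the paper's is more conservative in that it only ever evaluates the superoperator identity on physical density matrices. The two are equivalent here because density matrices span the full operator space, and the paper's own definition of the coefficients via $\Phi(\ketbra{i}{j})$ already presupposes the linear extension you use, so nothing is lost. Your handling of the detection-creation-incoherent case (splitting $\Delta\Phi=\Phi\Delta$ into the two previous conditions) is also a correct, slightly more explicit version of the paper's remark that it is a trivial consequence of the other two.
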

\begin{proof}
	$\Phi$ being completely positive is equivalent to $\Phi\left(\rho\right)=\sum_n K_n \rho K_n^\dagger$ which we can write as
	\begin{align}
		\Phi(\rho)=\sum_{n,i,j,a,b,c,d} K_{n_{a,b}}  \ketbra{a}{b}\rho_{i,j}\ketbra{i}{j}K_{n_{c,d}}^* \ketbra{d}{c}= \sum_{n,a,b,c,d} K_{n_{a,b}} K_{n_{c,d}}^*   \rho_{b,d}\ketbra{a}{c}
	\end{align}
	from which follows
	\begin{align}
		\Phi_{a,c}^{b,d}= \sum_n K_{n_{a,b}} K_{n_{c,d}}^*.
	\end{align}
	Now we come to the first part of the proposition. We have
	\begin{align}
		\Delta \Phi \rho = &\sum_{a,b,d} \Phi_{a,a}^{b,d}   \rho_{b,d}\ketbra{a}{a}, \nonumber \\
		\Delta \Phi \Delta \rho =& \sum_{a,b} \Phi_{a,a}^{b,b}   \rho_{b,b}\ketbra{a}{a}.
	\end{align}
	Thus $\Phi$ being detection-incoherent is equivalent to
	\begin{align}
		\sum_{b,d} \Phi_{s,s}^{b,d}   \rho_{b,d} = \sum_{b} \Phi_{s,s}^{b,b}     \rho_{b,b} \forall s, \rho,
	\end{align}	
	which is exactly the case if this condition holds for all pure $\rho=\ketbra{\psi}{\psi}$ with
	\begin{align}
		\ket{\psi}=\sum_j \sqrt{q_j} e^{i\gamma_j}\ket{j},
	\end{align}
	meaning $\Phi$ is detection-incoherent iff
	\begin{align}\label{m-inc equiv}
		\sum_{b\ne d} \Phi_{s,s}^{b,d}  \sqrt{q_b} \sqrt{q_d} e^{i(\gamma_b-\gamma_d)} = 0\forall s, \{q_i\} \text{prob. distr.}, \gamma_i\in \rr.
	\end{align}
	From this follows the necessary condition
	\begin{align}
		0=& \frac{1}{2}\left( \Phi_{s,s}^{b,d}     e^{i(\gamma_b-\gamma_d)} + \Phi_{s,s}^{d,b}     e^{i(\gamma_d-\gamma_b)}  \right)\nonumber \\
		=&\Real \left(\sum_{n} K_{n_{s,b}} K_{n_{s,d}}^*   e^{i(\gamma_b-\gamma_d)} \right) \forall s,b\ne d, \gamma_b \in  \rr , \gamma_d \in \rr,
	\end{align}
	where $\Real$ denotes the real part, 
	and thus
	\begin{align}
		\sum_{n} K_{n_{s,b}} K_{n_{s,d}}^* = \Phi_{s,s}^{b,d}=0 \forall s,b\ne d.
	\end{align}
	In addition, since we work with trace preserving operations, we have
	\begin{align}
		p(a|b)=\bra{a}\sum_n K_n \ketbra{b}{b}K_n^\dagger\ket{a} = \sum_n K_{n_{a,b}} K_{n_{a,b}}^*= \Phi_{a,a}^{b,b}
	\end{align}
	and therefore necessarily
	\begin{align}
		\sum_{n} K_{n_{s,b}} K_{n_{s,d}}^*= \Phi_{s,s}^{b,d}  =p\left(a|b\right)\delta_{b,d} \forall a,b,d.
	\end{align}
	This is already sufficient for trace preservation, since
	\begin{align}
		\id=\sum_n K_n^\dagger K_n \Leftrightarrow \sum_{n,a} K_{n_{a,b}} K^*_{n_{a,d}}=\sum_a  \Phi_{a,a}^{b,d}=\delta_{b,d}.
	\end{align}
	To show that condition~(\ref{m-inc cara}) is also sufficient, we can plug it into condition~(\ref{m-inc equiv})
	\begin{align}
		\sum_{n,b\ne d} K_{n_{s,b}} K_{n_{s,d}}^*   \sqrt{q_b} \sqrt{q_d} e^{i(\gamma_b-\gamma_d)} = \sum_{b\ne d} p\left(s|b\right)\delta_{b,d} \sqrt{q_b} \sqrt{q_d} e^{i(\gamma_b-\gamma_d)} = 0 \forall s, \{q_i\} \text{prob. distr.}, \gamma_i\in \rr,
	\end{align}
	and see that it is satisfied.
	
	Now we come to the second part.
	Assume $\Phi$ is creation-incoherent. From
	\begin{align}
		\Phi\left(\ketbra{i}{i}\right)=\sum_{k,l} \Phi_{k,l}^{i,i}\ketbra{k}{l}
	\end{align}
	and the condition $\Delta \Phi \Delta= \Phi \Delta$, we find
	\begin{align}
		\Phi_{k,l}^{i,i}=\Phi_{k,k}^{i,i} \delta_{k,l} \forall i.
	\end{align}
	Since $\Phi$ is a trace preserving quantum operation and $\ketbra{i}{i}$ a valid density operator, $\Phi_{k,k}^{i,i}$ have to be a non-negative real numbers with $\sum_k \Phi_{k,k}^{i,i}=1$. Thus $p(k|i)=\Phi_{k,k}^{i,i}$ defines conditional probabilities and we end up with the condition
	\begin{align}
		\Phi_{k,l}^{i,i}=&p(k|i) \delta_{k,l} \forall i,k,l.
	\end{align}
	This time, we need to enforce trace preservation separately, again by the condition
	\begin{align}
		\sum_a  \Phi_{a,a}^{b,d}=\delta_{b,d}.
	\end{align} Sufficiency is straight forward, the third statement a trivial consequence of the others.
\end{proof}

\section{Proofs of the results in the main text}
\setcounter{theorem}{1}
Here we give the proofs of our results in the main text, which we restate for readability.

\begin{proposition} \label{propA:mincPOVM}
	A POVM is free iff
	\begin{align}
		P_n= \sum_i P_i^n\ketbra{i}{i} \ \forall n.
	\end{align}
\end{proposition}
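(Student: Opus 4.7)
The plan is to establish the equivalence by moving the dephasing map $\Delta$ from the state to the POVM element via its self-adjointness with respect to the Hilbert--Schmidt inner product, and then invoking that a Hermitian operator which vanishes in trace against every density matrix must itself be zero.

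For the easy direction ($\Leftarrow$), I would just compute directly: if $P_n=\sum_i P_i^n \ketbra{i}{i}$, then
\begin{align}
\tr(P_n\rho)=\sum_i P_i^n\bra{i}\rho\ket{i},\qquad \tr(P_n\Delta\rho)=\sum_i P_i^n\bra{i}\Delta(\rho)\ket{i}=\sum_i P_i^n\bra{i}\rho\ket{i},
\end{align}
so the two coincide and the POVM is free.

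For the nontrivial direction ($\Rightarrow$), the key observation is that $\Delta$ as defined in Eq.~\eqref{eq:Delta} is self-adjoint with respect to the Hilbert--Schmidt inner product, i.e.\ $\tr(A\,\Delta(B))=\tr(\Delta(A)\,B)$ for all $A,B$. This follows from a one-line computation using $\Delta(X)=\sum_i\ketbra{i}{i}X\ketbra{i}{i}$ and the cyclicity of the trace. Using this, the freeness condition $\tr(P_n\rho)=\tr(P_n\Delta\rho)$ can be rewritten as
\begin{align}
\tr\bigl((P_n-\Delta(P_n))\,\rho\bigr)=0\qquad\forall\,\rho.
\end{align}
Since $P_n-\Delta(P_n)$ is Hermitian and this must hold for every density matrix $\rho$, and density matrices span the space of Hermitian operators, one concludes $P_n=\Delta(P_n)$, which is exactly the statement that $P_n$ is diagonal in the incoherent basis, i.e.\ $P_n=\sum_i P_i^n\ketbra{i}{i}$ with $P_i^n=\bra{i}P_n\ket{i}$.

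There is no real obstacle here; the only thing to be careful about is justifying the passage from ``trace against all $\rho$'' to ``operator is zero'', which one can do either via the spanning argument above, or alternatively by testing against the explicit family $\rho=\ketbra{i}{i}$ (which forces the diagonal entries of $P_n-\Delta(P_n)$ to vanish, trivially true) and $\rho=\tfrac12(\ket{i}\pm\ket{j})(\bra{i}\pm\bra{j})$ and $\rho=\tfrac12(\ket{i}\pm i\ket{j})(\bra{i}\mp i\bra{j})$ to force the real and imaginary parts of the off-diagonal matrix elements $\bra{i}P_n\ket{j}$ with $i\neq j$ to vanish. Either formulation yields the proposition in a few lines.
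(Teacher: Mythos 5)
Your proof is correct. The easy direction matches the paper's (which simply declares sufficiency obvious), but your main argument for necessity is genuinely different from the paper's. The paper works with explicit test states $\ket{\psi}=\tfrac{1}{\sqrt{2}}(\ket{a}+e^{i\phi}\ket{b})$ and extracts $0=2\Real\left(P^n_{a,b}e^{-i\phi}\right)$ for all $\phi$, forcing each off-diagonal entry to vanish; this is essentially the ``alternative'' you sketch in your last paragraph. Your primary route instead exploits that $\Delta$ is self-adjoint for the Hilbert--Schmidt inner product, rewrites freeness as $\tr\bigl((P_n-\Delta(P_n))\rho\bigr)=0$ for all $\rho$, and uses the fact that density matrices span the Hermitian operators to conclude $P_n=\Delta(P_n)$. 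Both arguments are complete; the spanning step is legitimate (any Hermitian $H$ with $\tr(H\rho)=0$ for all states satisfies $\tr(H^2)=0$, hence $H=0$). What your version buys is conceptual economy and generality: it shows that for \emph{any} resource-destroying map that is a self-adjoint projection (not just dephasing onto a fixed basis), the free POVM elements are exactly the fixed points of that map, whereas the paper's computation is tied to the specific form of $\Delta$. The paper's version buys explicitness — it exhibits the concrete coherent states whose statistics witness non-freeness, which is closer in spirit to the operational reading of Definition 1.
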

\begin{proof}
	Let us use the notation
	\begin{align}
		P_n=\sum_{i,j} P^n_{i,j} \ketbra{i}{j}
	\end{align}
	and
	\begin{align}
		\ket{\psi}=\frac{1}{\sqrt{2}}\left(\ket{a}+e^{i \phi}\ket{b}\right)
	\end{align}
	with $a\ne b$ and $\phi \in \rr$. Now assume that the POVM is free. From Def.~1 in the main text follows for $\tilde{\rho}=\ketbra{\psi}{\psi}$
	\begin{align}
		&\tr P_n \Delta \tilde{\rho} = \tr P_n \tilde{\rho} \nonumber \\
		\Leftrightarrow\ & P^n_{a,a} + P^n_{b,b} = P^n_{a,a} + P^n_{b,b} + P^n_{a,b} e^{-i\phi} + P^n_{b,a} e^{i \phi} \nonumber \\
		\Leftrightarrow\ &0= 2 \Real\left( P^n_{a,b} e^{-i\phi}\right)
	\end{align}
	such that
	\begin{align}
		P_n= \sum_i P_i^n\ketbra{i}{i} \ \forall n
	\end{align}
	is a necessary condition. It is obviously also sufficient.
\end{proof}

\setcounter{theorem}{6}
\begin{proposition}\label{propA:genMonotone}
	Let $\|\cdot\|$ denote a sub-multiplicative norm on quantum operations which is sub-multiplicative with respect to tensor products. If $\|\tilde{\Phi}\|\le 1$ for all $\tilde{\Phi}$ detection-incoherent, the functional
	\begin{align}
		M(\Theta)=\min_{\Phi \in \mathcal{DI}} \|\Delta \Theta -\Delta \Phi \|
	\end{align}
	is a measure in the detection-incoherent setting.
\end{proposition}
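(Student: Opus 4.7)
The plan is to verify each of the three conditions required of a measure by Def.~\ref{def:Monotone}, invoking Prop.~\ref{prop:Monotone} so that I only need to check monotonicity under $\mathcal{E}_{1,\Phi}$, $\mathcal{E}_{2,\Phi}$ with $\Phi$ detection-incoherent, and $\mathcal{E}_{3,\id}$. Throughout I will exploit Lemma~\ref{lem:incComp}, the fact that $\Delta$ itself is detection-incoherent (since $\Delta^2=\Delta$), and the hypothesis $\|\Phi\|\le1$ for all $\Phi\in\mathcal{DI}$.

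First I would check \textbf{faithfulness}. If $\Theta\in\mathcal{DI}$, then $\Phi=\Theta$ is feasible and gives $M(\Theta)=0$. Conversely, if $M(\Theta)=0$ there exists $\Phi\in\mathcal{DI}$ with $\Delta\Theta=\Delta\Phi$; post-composing with $\Delta$ on the right and using $\Delta\Phi=\Delta\Phi\Delta$ then yields $\Delta\Theta=\Delta\Theta\Delta$, i.e.\ $\Theta\in\mathcal{DI}$. \textbf{Convexity} is immediate: given optimal $\Phi_i^\ast$ for $\Theta_i$, the convex combination $\Phi=p\Phi_1^\ast+(1-p)\Phi_2^\ast$ is again in $\mathcal{DI}$ (the detection-incoherence condition is linear in $\Phi$), so the triangle inequality gives
\begin{align}
M\!\left(p\Theta_1+(1-p)\Theta_2\right)\le\|p(\Delta\Theta_1-\Delta\Phi_1^\ast)+(1-p)(\Delta\Theta_2-\Delta\Phi_2^\ast)\|\le pM(\Theta_1)+(1-p)M(\Theta_2).
\end{align}

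For \textbf{monotonicity} I fix an optimizer $\Phi^\ast\in\mathcal{DI}$ for $M(\Theta)$ and construct a feasible competitor for the transformed operation in each case. For $\mathcal{E}_{2,\tilde\Phi}[\Theta]=\Theta\circ\tilde\Phi$ with $\tilde\Phi\in\mathcal{DI}$, the candidate $\Phi^\ast\circ\tilde\Phi\in\mathcal{DI}$ (Lemma~\ref{lem:incComp}) gives $M(\Theta\tilde\Phi)\le\|(\Delta\Theta-\Delta\Phi^\ast)\tilde\Phi\|\le\|\Delta\Theta-\Delta\Phi^\ast\|\cdot\|\tilde\Phi\|\le M(\Theta)$ by sub-multiplicativity. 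For $\mathcal{E}_{3,\id}[\Theta]=\Theta\otimes\id$, the competitor $\Phi^\ast\otimes\id\in\mathcal{DI}$ gives, using $\Delta^{AB}=\Delta^A\otimes\Delta^B$,
\begin{align}
M(\Theta\otimes\id)\le\|(\Delta\Theta-\Delta\Phi^\ast)\otimes\Delta\|\le\|\Delta\Theta-\Delta\Phi^\ast\|\cdot\|\Delta\|\le M(\Theta),
\end{align}
where sub-multiplicativity with respect to tensor products and $\Delta\in\mathcal{DI}$ enter.

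The step that requires the most care is monotonicity under $\mathcal{E}_{1,\tilde\Phi}[\Theta]=\tilde\Phi\circ\Theta$. The natural competitor is $\tilde\Phi\circ\Phi^\ast\in\mathcal{DI}$, but the expression $\Delta\tilde\Phi\Theta-\Delta\tilde\Phi\Phi^\ast$ does not immediately factor through $\Delta\Theta-\Delta\Phi^\ast$. The key observation is that $\tilde\Phi\in\mathcal{DI}$ means $\Delta\tilde\Phi=\Delta\tilde\Phi\Delta$, hence
\begin{align}
\Delta\tilde\Phi\Theta-\Delta\tilde\Phi\Phi^\ast=\Delta\tilde\Phi\Delta\Theta-\Delta\tilde\Phi\Delta\Phi^\ast=(\Delta\tilde\Phi)\circ(\Delta\Theta-\Delta\Phi^\ast).
\end{align}
Since $\Delta\tilde\Phi=\Delta\circ\tilde\Phi\in\mathcal{DI}$ by Lemma~\ref{lem:incComp}, the assumption gives $\|\Delta\tilde\Phi\|\le1$, and sub-multiplicativity closes the bound. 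I expect this rewriting step, together with the bookkeeping of which composed objects lie in $\mathcal{DI}$, to be the main obstacle; everything else is either definitional or a direct application of the norm inequalities assumed in the hypothesis.
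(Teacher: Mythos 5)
Your proposal is correct and follows essentially the same route as the paper's proof: the same competitors $\Phi^\ast\circ\tilde\Phi$, $\tilde\Phi\circ\Phi^\ast$, and $\Phi^\ast\otimes\id$, the same rewriting $\Delta\tilde\Phi=\Delta\tilde\Phi\Delta$ to factor the left-composition case through $\Delta\tilde\Phi\circ(\Delta\Theta-\Delta\Phi^\ast)$ with $\|\Delta\tilde\Phi\|\le1$, and the same use of $\|\Delta\|\le1$ in the tensor case. The only (harmless) difference is that you spell out the faithfulness and convexity arguments slightly more explicitly than the paper does.
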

\begin{proof}
	Since $\|\cdot\|$ is a norm, $M(\Theta)$ is faithful. Remember that $\tilde{\Phi}\Phi\in \mathcal{DI}$ if both $\tilde{\Phi},\Phi \in \mathcal{DI}$. For $\tilde{\Phi} \in \mathcal{DI}$, we therefore have
	\begin{align}
		M(\Theta \tilde{\Phi})=&\min_{\Phi \in \mathcal{DI}} \|\Delta \Theta \tilde{\Phi} -\Delta \Phi \| \nonumber \\
		\le& \min_{\Phi \in \mathcal{DI}} \|\Delta \Theta \tilde{\Phi} -\Delta \Phi \tilde{\Phi} \| \nonumber \\
		\le& \min_{\Phi \in \mathcal{DI}} \|\Delta \Theta  -\Delta \Phi  \| \|\tilde{\Phi}\|\nonumber \\
		\le& \min_{\Phi \in \mathcal{DI}} \|\Delta \Theta  -\Delta \Phi  \| \nonumber \\
		=&M(\Theta)
	\end{align}
	and
	\begin{align}
		M( \tilde{\Phi} \Theta )=&\min_{\Phi \in \mathcal{DI}} \|\Delta \tilde{\Phi} \Theta  -\Delta \Phi \| \nonumber \\
		=&\min_{\Phi \in \mathcal{DI}} \|\Delta \tilde{\Phi} \Delta \Theta  -\Delta \Phi \| \nonumber \\
		\le& \min_{\Phi \in \mathcal{DI}} \|\Delta \tilde{\Phi} \Delta \Theta  -\Delta \tilde{\Phi} \Phi \| \nonumber \\
		=& \min_{\Phi \in \mathcal{DI}} \|\Delta \tilde{\Phi} \Delta \Theta  -\Delta \tilde{\Phi} \Delta \Phi  \| \nonumber \\
		\le& \min_{\Phi \in \mathcal{DI}} \|\Delta \tilde{\Phi}\| \|\Delta \Theta  -\Delta \Phi  \| \nonumber \\
		\le& \min_{\Phi \in \mathcal{DI}} \|\Delta \Theta  -\Delta \Phi  \| \nonumber \\
		=&M(\Theta).
	\end{align}
	Since the norm is sub-multiplicative with respect to tensor products, we have
	\begin{align}
		M(\Theta \otimes\id )=&\min_{\Phi \in \mathcal{DI}}\|\Delta (\Theta \otimes\id) -\Delta \Phi \| \nonumber \\
		\le&\min_{\Phi=\Phi_1\otimes\id \in \mathcal{DI}}\|(\Delta \Theta) \otimes\Delta -(\Delta \Phi_1 ) \otimes\Delta\| \nonumber \\
		=&\min_{\Phi_1 \in \mathcal{DI}}\|(\Delta \Theta -\Delta \Phi_1 ) \otimes\Delta\| \nonumber \\
		\le&\min_{\Phi_1 \in \mathcal{DI}}\|\Delta \Theta -\Delta \Phi_1 \|\|\Delta\| \nonumber \\
		\le&M(\Theta).
	\end{align}
	Convexity is a consequence of absolute homogeneity and the triangle inequality. Choose $\Phi_1$, $\Phi_2$ such that
	\begin{align}
		M(\Theta)=&\|\Delta \Theta-\Delta \Phi_1\|, \nonumber \\
		M(\Psi)=&\|\Delta \Psi-\Delta \Phi_2\|.
	\end{align}
	Then, for $0\le t\le 1$, we find
	\begin{align}
		M(t\Theta +(1-t) \Psi)=&\min_{\Phi \in \mathcal{DI}}\| \Delta (t\Theta +(1-t) \Psi)-\Delta \Phi \| \nonumber \\
		\le&\| \Delta (t\Theta +(1-t) \Psi)-\Delta (t\Phi_1+(1-t)\Phi_2)\| \nonumber \\
		=& \| t(\Delta \Theta -\Delta \Phi_1) +(1-t)(\Delta \Psi-\Delta \Phi_2) \| \nonumber \\
		\le& t \| \Delta \Theta -\Delta \Phi_1\| +(1-t)\|\Delta \Psi-\Delta \Phi_2 \| \nonumber \\
		=&tM(\Theta)+(1-t)M(\Psi).
	\end{align}
\end{proof}

\begin{theorem} \label{corA:DiamondMonotone}
	The functional
	\begin{align}
		M_\diamond(\Theta)=\min_{\Phi \in \mathcal{DI}} \|\Delta \Theta -\Delta \Phi \|_\diamond
	\end{align}
	is a measure in the detection-incoherent setting. We call this measure the diamond-measure.
\end{theorem}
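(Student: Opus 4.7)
The plan is to deduce Theorem~\ref{corA:DiamondMonotone} from Proposition~\ref{propA:genMonotone} by showing that the diamond norm meets the three hypotheses of that proposition, namely sub-multiplicativity under composition, sub-multiplicativity under tensor products, and the bound $\|\Phi\|_\diamond \le 1$ for every $\Phi \in \mathcal{DI}$. Once these three facts are in hand, the whole result is immediate.

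First I would recall sub-multiplicativity of $\|\cdot\|_\diamond$ under composition. Starting from the definition $\|\Theta\|_\diamond = \sup_Z \|\Theta \otimes \id^Z\|_1$ and using the fact that the induced trace norm on super-operators is sub-multiplicative (a standard consequence of $\|AB\|_1 \le \|A\|_1 \|B\|_\infty$ combined with the contractivity of the trace norm under completely positive trace-non-increasing maps), one obtains
\begin{align}
\|(\Theta_1 \circ \Theta_2) \otimes \id^Z\|_1 = \|(\Theta_1 \otimes \id^Z) \circ (\Theta_2 \otimes \id^Z)\|_1 \le \|\Theta_1 \otimes \id^Z\|_1\, \|\Theta_2 \otimes \id^Z\|_1,
\end{align}
and taking the supremum over $Z$ gives the required inequality. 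Next I would verify sub-multiplicativity under tensor products, $\|\Theta_1 \otimes \Theta_2\|_\diamond \le \|\Theta_1\|_\diamond \|\Theta_2\|_\diamond$; this is a standard property of the diamond norm and follows by writing $\Theta_1 \otimes \Theta_2 = (\Theta_1 \otimes \id) \circ (\id \otimes \Theta_2)$ and applying the composition bound just established, together with the fact that appending an identity channel does not change the diamond norm.

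Then I would check the normalization hypothesis $\|\Phi\|_\diamond \le 1$ for every detection-incoherent $\Phi$. With our convention (stated just above Def.~\ref{def:minc}) that subselection is outsourced to an ancilla, every free operation is trace-preserving and completely positive. For any CPTP map $\Phi^{B\leftarrow A}$ and any input $X^{AC}$ with $\|X\|_1 \le 1$, writing $X = X_+ - X_-$ for its Jordan decomposition and using that $(\Phi \otimes \id^C)$ preserves the trace on positive operators, one gets $\|(\Phi \otimes \id^C)(X)\|_1 \le \tr X_+ + \tr X_- = \|X\|_1 \le 1$, so $\|\Phi\|_\diamond \le 1$ (in fact equals $1$).

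With these three properties confirmed, Proposition~\ref{propA:genMonotone} directly yields that $M_\diamond$ is faithful, monotone under sequential concatenation with free operations on either side, monotone under tensoring with the identity, and convex, which by Proposition~\ref{prop:Monotone} is precisely what is needed to be a measure in the detection-incoherent setting. The main obstacle here is essentially cosmetic: all three norm-theoretic facts are textbook, so the real content of this theorem is Proposition~\ref{propA:genMonotone} itself together with the prior simplification reducing monotonicity under all free super-operations to monotonicity under the four elemental ones.
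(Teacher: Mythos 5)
Your proposal is correct and follows exactly the route the paper takes: the paper's own proof is a one-line appeal to Proposition~\ref{propA:genMonotone}, citing that the diamond norm of a quantum channel equals one and that the diamond norm is sub-multiplicative both under composition and under tensor products. The only difference is that you sketch proofs of these three textbook facts rather than citing them, which is harmless (though for the bound $\|\Phi\|_\diamond\le 1$ one should note that for Hermiticity-preserving maps the induced trace norm is attained on Hermitian inputs, so the Jordan-decomposition argument suffices).
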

\begin{proof}
	The diamond norm of a quantum operation is equal to one, the diamond norm is sub-multiplicative and sub-multiplicative with respect to tensor products~\cite{watrous2018theory}.
\end{proof}

\begin{theorem}
	The functional
	\begin{align}
		\tilde{M}_\diamond(\Theta)=\min_{\Phi\in \mathcal{DI}} \max_{\ket{\psi}}\|\Delta \left(\Theta- \Phi \right) \ketbra{\psi}{\psi} \|_1
	\end{align}
	is a measure in the detection-incoherent setting.
	We call it the nSID-measure (non-stochasticity in detection).
\end{theorem}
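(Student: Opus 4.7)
The plan is to verify the three defining properties of Definition~\ref{def:Monotone} in turn, leveraging Proposition~\ref{prop:Monotone} so that monotonicity needs only be established on the three elementary families $\mathcal{E}_{1,\tilde\Phi}$, $\mathcal{E}_{2,\tilde\Phi}$, and $\mathcal{E}_{3,\id}$. Faithfulness should be immediate: if $\Theta\in\mathcal{DI}$, set $\Phi=\Theta$ in the minimization; conversely, if $\tilde M_\diamond(\Theta)=0$, compactness of $\mathcal{DI}$ yields a $\Phi\in\mathcal{DI}$ with $\Delta(\Theta-\Phi)\ketbra{\psi}{\psi}=0$ for every pure $\ket\psi$, hence by linearity $\Delta\Theta=\Delta\Phi=\Delta\Phi\Delta=\Delta\Theta\Delta$, so that $\Theta\in\mathcal{DI}$.

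For sequential monotonicity I would mirror the argument of Proposition~\ref{propA:genMonotone}. Given $\tilde\Phi\in\mathcal{DI}$, I would substitute the ansatz $\Phi=\tilde\Phi\Phi'$ (for $\mathcal{E}_{1,\tilde\Phi}$) or $\Phi=\Phi'\tilde\Phi$ (for $\mathcal{E}_{2,\tilde\Phi}$) into the minimization, which is admissible because $\mathcal{DI}$ is closed under composition by Lemma~\ref{lem:incComp}. Using the defining identity $\Delta\tilde\Phi=\Delta\tilde\Phi\Delta$ to pull $\Delta$ through $\tilde\Phi$, the first case reduces to contractivity of the trace norm under the CPTP map $\Delta\tilde\Phi$, while the second reduces to a convexity argument: decompose $\tilde\Phi\ketbra{\psi}{\psi}$ as a convex mixture of pure states and bound the resulting trace norm by the maximum over its pure-state components. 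Convexity of $\tilde M_\diamond$ itself will follow in the same spirit as in Proposition~\ref{propA:genMonotone}, by choosing $t\Phi_1+(1-t)\Phi_2\in\mathcal{DI}$ for optimizers of $\Theta$ and $\Psi$, applying the triangle inequality inside the trace norm, and using $\max(A+B)\le\max A+\max B$.

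The delicate step, and in my view the main obstacle, is monotonicity under $\mathcal{E}_{3,\id}[\Theta]=\Theta\otimes\id$: unlike the diamond norm, the induced trace norm is not sub-multiplicative with respect to tensor products, so the generic argument of Proposition~\ref{propA:genMonotone} does not apply directly. My strategy is to exploit Lemma~\ref{lem:evecM}: for any input $\ket\psi^{AB}$, partial dephasing on $B$ yields $(\id\otimes\Delta)\ketbra{\psi}{\psi}^{AB}=\sum_{a,b}q_{a|b}p_b\ketbra{\phi_{a|b},b}{\phi_{a|b},b}$, a convex combination of product pure states. Taking the ansatz $\Phi=\Phi'\otimes\id$ with $\Phi'\in\mathcal{DI}$ (which is detection-incoherent by the tensor part of Lemma~\ref{lem:incComp}) and observing $\Delta(\Theta\otimes\id)-\Delta(\Phi'\otimes\id)=\Delta(\Theta-\Phi')\otimes\Delta$, the resulting output is block-diagonal in the $B$-basis, so that
\begin{align*}
\|(\Delta(\Theta-\Phi')\otimes\Delta)\ketbra{\psi}{\psi}^{AB}\|_1=\sum_b\Big\|\sum_a q_{a|b}p_b\,\Delta(\Theta-\Phi')\ketbra{\phi_{a|b}}{\phi_{a|b}}\Big\|_1.
\end{align*}
The triangle inequality together with $\sum_{a,b}q_{a|b}p_b=1$ will then bound the right-hand side by $\max_{\ket{\tilde\psi}}\|\Delta(\Theta-\Phi')\ketbra{\tilde\psi}{\tilde\psi}\|_1$, and maximizing over $\ket\psi^{AB}$ followed by a minimization over $\Phi'\in\mathcal{DI}$ will yield the desired $\tilde M_\diamond(\Theta\otimes\id)\le\tilde M_\diamond(\Theta)$.
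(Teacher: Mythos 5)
Your proposal is correct and follows essentially the same route as the paper's proof: faithfulness by direct substitution and the norm property, sequential monotonicity via the ansätze $\tilde\Phi\Phi'$ and $\Phi'\tilde\Phi$ combined with contractivity of the trace norm under $\Delta\tilde\Phi$ and the pure-state convexity reduction, and tensor monotonicity via the product ansatz $\Phi'\otimes\id$ together with Lemma~\ref{lem:evecM} to reduce to product pure states. Your block-diagonal decomposition of the trace norm over the $B$-register is only a cosmetic variant of the paper's use of $\|X\otimes\ketbra{i}{i}\|_1=\|X\|_1$, so no substantive difference remains.
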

\begin{proof}
	Since the induced trace norm is not sub-multiplicative with respect to tensor products, we cannot use Prop.~\ref{propA:genMonotone}.
	To begin the proof, we notice that by convexity,
	\begin{align}
		\max_{\ket{\psi}}\|\Delta \left(\Theta- \Phi \right) \ketbra{\psi}{\psi} \|_1= \max_{\sigma}\|\Delta \left(\Theta- \Phi \right) \sigma \|_1.
	\end{align}
	For $\tilde{\Phi} \in \mathcal{DI}$, we have
	\begin{align}
		\tilde{M}_\diamond(\tilde{\Phi})=&\min_{\Phi\in \mathcal{DI}} \max_{\ket{\psi}}\|\Delta \left(\tilde{\Phi}- \Phi \right) \ketbra{\psi}{\psi} \|_1 \nonumber \\
		\le& \max_{\ket{\psi}}\|\Delta \left(\tilde{\Phi}- \tilde{\Phi}\right) \ketbra{\psi}{\psi} \|_1 \nonumber \\
		=&0
	\end{align}
	and for $\Delta \Theta \ne \Delta \Theta \Delta$, there exists for all  $\Phi \in \mathcal{DI}$ a $\ket{\psi}$ such that we have $(\Delta \Theta-\Delta \Phi \Delta) \ketbra{\psi}{\psi} \ne 0 $. Since $\| \cdot \|_1$ is a norm, this proves faithfulness.
	
	From this follows, again for $\tilde{\Phi} \in \mathcal{DI}$,
	\begin{align}
		\tilde{M }_\diamond(\Theta \tilde{\Phi})=&\min_{\Phi\in \mathcal{DI}} \max_{\sigma}\|\Delta \left(\Theta \tilde{\Phi}- \Phi \right) \sigma \|_1 \nonumber \\
		\le&\min_{\Phi\in \mathcal{DI}} \max_{\sigma}\|\Delta \left(\Theta\tilde{\Phi}- \Phi \tilde{\Phi} \right) \sigma \|_1 \nonumber \\
		=&\min_{\Phi\in \mathcal{DI}} \max_{\sigma}\|\Delta \left(\Theta- \Phi  \right) \tilde{\Phi} \sigma \|_1 \nonumber \\
		=&\min_{\Phi\in \mathcal{DI}} \max_{\rho=\tilde{\Phi}\sigma}\|\Delta \left(\Theta- \Phi  \right) \rho \|_1 \nonumber \\
		\le&\min_{\Phi\in \mathcal{DI}} \max_{\sigma}\|\Delta \left(\Theta- \Phi  \right) \sigma \|_1 \nonumber \\
		=&\tilde{M }_\diamond(\Theta ),
	\end{align}
	where we used in the second line that $\Phi \tilde{\Phi} \in \mathcal{DI}$ if $\Phi,\tilde{\Phi} \in \mathcal{DI}$.
	Using that the trace norm is contractive under CPTP maps, we find
	\begin{align}
		\tilde{M }_\diamond( \tilde{\Phi} \Theta)=&\min_{\Phi\in \mathcal{DI}} \max_{\sigma}\|\Delta \left(\tilde{\Phi}\Theta - \Phi \right) \sigma \|_1 \nonumber \\
		\le&\min_{\Phi\in \mathcal{DI}} \max_{\sigma}\|\Delta \left(\tilde{\Phi}\Theta - \tilde{\Phi} \Phi \right) \sigma \|_1 \nonumber \\
		=&\min_{\Phi\in \mathcal{DI}} \max_{\sigma}\| \Delta\tilde{\Phi} \Delta \left(\Theta - \Phi \right) \sigma \|_1 \nonumber \\
		\le&\min_{\Phi\in \mathcal{DI}} \max_{\sigma}\| \Delta \left(\Theta - \Phi \right) \sigma \|_1 \nonumber \\
		=&\tilde{M }_\diamond( \Theta).
	\end{align}
	With the help of Lem.~\ref{lem:evecM} follows
	\begin{align}
		\tilde{M }_\diamond( \Theta\otimes \id)=&\min_{\tilde{\Phi}\in \mathcal{DI}} \max_{\sigma}\left\|\Delta \left(\Theta \otimes \id - \tilde{\Phi} \right) \sigma \right\|_1 \nonumber \\
		\le&\min_{\Phi\in \mathcal{DI}} \max_{\sigma}\|\Delta \left(\Theta \otimes \id - \Phi  \otimes \id\right) \sigma \|_1 \nonumber \\
		=&\min_{\Phi\in \mathcal{DI}} \max_{\sigma}\|\left(\Delta \Theta\otimes \id  -\Delta  \Phi\otimes \id  \right)\left( \id \otimes \Delta\right) \sigma \|_1 \nonumber \\
		=&\min_{\Phi\in \mathcal{DI}} \max_{\rho=(\id\otimes\Delta)\sigma}\|\left(\Delta \Theta\otimes \id  -\Delta  \Phi\otimes \id  \right)\rho\|_1 \nonumber \\
		\le&\min_{\Phi\in \mathcal{DI}} \max_{q_{a|b},p_b,\ket{\phi_{a|b},b}}\left\|\sum_{a,b} q_{a|b} p_b \left(\Delta \Theta\otimes \id  -\Delta  \Phi\otimes \id  \right)\ketbra{\phi_{a|b},b}{\phi_{a|b},b} \right\|_1 \nonumber \\
		\le&\min_{\Phi\in \mathcal{DI}} \max_{q_{a|b},p_b,\ket{\phi_{a|b},b}} \sum_{a,b} q_{a|b} p_b \| \left(\Delta \Theta\otimes \id  -\Delta  \Phi\otimes \id  \right)\ketbra{\phi_{a|b},b}{\phi_{a|b},b} \|_1 \nonumber \\
		\le&\min_{\Phi\in \mathcal{DI}} \sum_{a,b} q_{a|b} p_b  \max_{\ket{\phi,i}} \| \left(\Delta \Theta\otimes \id  -\Delta  \Phi\otimes \id  \right)\ketbra{\phi,i}{\phi,i} \|_1 \nonumber \\
		=&\min_{\Phi\in \mathcal{DI}}  \max_{\ket{\phi},\ket{i}} \| \left[\left(\Delta \Theta  -\Delta  \Phi \right)\ketbra{\phi}{\phi}\right] \otimes\ketbra{i}{i} \|_1 \nonumber \\
		=&\min_{\Phi\in \mathcal{DI}}  \max_{\ket{\phi},\ket{i}} \| \left(\Delta \Theta  -\Delta  \Phi \right)\ketbra{\phi}{\phi} \|_1 \|\ketbra{i}{i} \|_1 \nonumber \\
		=&\min_{\Phi\in \mathcal{DI}}  \max_{\ket{\phi}} \| \left(\Delta \Theta  -\Delta  \Phi \right)\ketbra{\phi}{\phi} \|_1  \nonumber \\
		=&\tilde{M }_\diamond( \Theta).
	\end{align}
	Convexity is a consequence of absolute homogeneity and the triangle inequality. Choose $\Phi_1$, $\Phi_2$, $\sigma_1$, $\sigma_2$ such that
	\begin{align}
		\tilde{M}_\diamond(\Theta)=&\|\left(\Delta \Theta-\Delta \Phi_1 \right)\sigma_1\|_1 \nonumber \\
		\tilde{M}_\diamond(\Psi)=&\|\left(\Delta \Psi-\Delta \Phi_2 \right)\sigma_2\|_1.
	\end{align} Then, for $0\le t\le 1$, we find
	\begin{align}
		\tilde{M}_\diamond(t\Theta +(1-t) \Psi)=&\min_{\Phi \in \mathcal{DI}} \max_\sigma \|\left[ \Delta (t\Theta +(1-t) \Psi)-\Delta \Phi\right] \sigma \|_1 \nonumber \\
		\le&\max_\sigma\|\left[ \Delta (t\Theta +(1-t) \Psi)-\Delta (t\Phi_1+(1-t)\Phi_2)\right]\sigma \|_1 \nonumber \\
		=&\max_\sigma \| t(\Delta \Theta -\Delta \Phi_1)\sigma +(1-t)(\Delta \Psi-\Delta \Phi_2)\sigma \|_1 \nonumber \\
		\le&\max_\sigma \left[ t \| \left(\Delta \Theta -\Delta \Phi_1 \right) \sigma\|_1 +(1-t)\|\left(\Delta \Psi-\Delta \Phi_2 \right) \sigma \|_1 \right]\nonumber \\
		\le&\max_\sigma  t \| \left(\Delta \Theta -\Delta \Phi_1 \right) \sigma\|_1 + \max_\sigma (1-t)\|\left(\Delta \Psi-\Delta \Phi_2 \right) \sigma \|_1 \nonumber \\
		=&tM(\Theta)+(1-t)M(\Psi).
	\end{align}
\end{proof}

\begin{proposition}\label{propA:MaxMeas}
	The maximum value of $\tilde{M}_\diamond(\Theta)$ for $\Theta$ a quantum channel with input of dimension $n$ and output of dimension $m$ is given by
	\begin{align}
		\frac{2 (N_0-1)}{N_0},
	\end{align}
	where $N_0=\min\{n,m\}$. It is both saturated by a Fourier transform in a subspace of dimension $N_0$ and by a measurement in the Fourier basis, encoding the outcomes in the incoherent basis.
\end{proposition}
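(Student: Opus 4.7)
The plan is to prove the matching upper and lower bound separately. For the upper bound $\tilde M_\diamond(\Theta)\leq 2(N_0-1)/N_0$ I split on the sign of $m-n$ and exhibit a detection-incoherent competitor in each case. When $n\leq m$ (so $N_0=n$), take $\Phi:=\Theta\circ\Delta$; then $\Delta\Phi=\Delta\Theta\Delta=\Delta\Phi\Delta$ using $\Delta^2=\Delta$, so $\Phi\in\mathcal{DI}$, and $\Delta(\Theta-\Phi)\ketbra{\psi}{\psi}=\Delta\Theta(\ketbra{\psi}{\psi}-\Delta\ketbra{\psi}{\psi})$. Its trace norm is bounded by $\|\ketbra{\psi}{\psi}-\Delta\ketbra{\psi}{\psi}\|_1$ by trace-norm contractivity of the CPTP map $\Delta\Theta$ on Hermitian operators. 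When $m\leq n$ (so $N_0=m$), take the constant channel $\Phi(\rho):=(\tr\rho)\id_m/m$, which is in $\mathcal{DI}$; then $\Delta(\Theta-\Phi)\ketbra{\psi}{\psi}=\Delta\Theta\ketbra{\psi}{\psi}-\id_m/m$ has trace norm equal to the $\ell_1$-distance of a probability distribution on $m$ outcomes from the uniform distribution, which is at most $2(m-1)/m$ and is attained at point masses.

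The hard part, and the main obstacle, is proving the auxiliary estimate $\|\ketbra{\psi}{\psi}-\Delta\ketbra{\psi}{\psi}\|_1\leq 2(n-1)/n$ for every pure $\psi\in\cc^n$. Setting $\omega:=\ketbra{\psi}{\psi}-\Delta\ketbra{\psi}{\psi}$ and $p_i:=|\langle i|\psi\rangle|^2$, one notes that $\omega=\ketbra{\psi}{\psi}-D$ with $D\geq 0$ diagonal and $\ketbra{\psi}{\psi}$ of rank one, so Weyl's inequalities force $\omega$ to have at most one strictly positive eigenvalue $\lambda_+$; combined with $\tr\omega=0$ this yields $\|\omega\|_1=2\lambda_+$. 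Working out the eigenvalue equation $\omega\ket{v}=\lambda_+\ket{v}$ produces the secular identity
\begin{equation*}
\sum_{i:\,p_i>0}\frac{p_i}{\lambda_++p_i}=1,
\end{equation*}
equivalently $\lambda_+\sum_i 1/(\lambda_++p_i)=n-1$ where $n$ is the number of nonzero $p_i$. The AM-HM inequality $\sum_i 1/(\lambda_++p_i)\geq n^2/(n\lambda_++1)$ then yields $\lambda_+\leq(n-1)/n$, with equality iff all $p_i=1/n$.

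For the matching lower bound I verify saturation directly at the Fourier-basis measurement $\Theta_F(\rho):=\sum_{k=0}^{N_0-1}\bra{f_k}\rho\ket{f_k}\ketbra{k}{k}$ on an $N_0$-dimensional subspace. For every Fourier input $\ket{f_l}$ with $l<N_0$ one has $\Delta\Theta_F\ketbra{f_l}{f_l}=\ketbra{l}{l}$ while $\Delta\ketbra{f_l}{f_l}=\id_{N_0}/N_0$, so $\Delta\Phi\ketbra{f_l}{f_l}=\Delta\Phi(\id_{N_0}/N_0)=:\sigma$ is a single incoherent distribution independent of $l$ for every $\Phi\in\mathcal{DI}$. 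Writing $\sigma=\sum_k\sigma_k\ketbra{k}{k}$,
\begin{equation*}
\max_{\ket{\psi}}\|\Delta(\Theta_F-\Phi)\ketbra{\psi}{\psi}\|_1\geq\max_{l<N_0}\|\ketbra{l}{l}-\sigma\|_1=2-2\min_{l<N_0}\sigma_l\geq 2(N_0-1)/N_0,
\end{equation*}
because at least one of $N_0$ probability entries must be $\leq 1/N_0$; equality is attained by $\sigma=\id_{N_0}/N_0$, so $\tilde M_\diamond(\Theta_F)=2(N_0-1)/N_0$. An analogous argument using $U\ket{f_l}=\ket{-l\bmod N_0}$ (for the standard DFT convention) handles the Fourier unitary, since then $\Delta U\ketbra{f_l}{f_l}U^\dagger=\ketbra{-l}{-l}$ again runs over $N_0$ distinct incoherent pure states while $\Delta\Phi\ketbra{f_l}{f_l}$ remains independent of $l$.
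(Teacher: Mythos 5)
Your proof is correct, and its overall architecture (case split $n\le m$ versus $m\le n$ with an explicit detection-incoherent competitor for the upper bound, plus saturation via the observation that every $\Phi\in\mathcal{DI}$ sends all $N_0$ Fourier-basis inputs to one and the same incoherent distribution $\sigma$, one of whose entries must be $\le 1/N_0$) matches the paper's. The one genuinely different ingredient is the competitor in the $n\le m$ branch: the paper effectively takes $\Phi=\Theta\circ\Delta\circ\tilde\Phi$ with $\tilde\Phi$ the completely depolarizing channel, which after contractivity reduces everything to $\max_\rho\|\rho-\id_n/n\|_1=2(n-1)/n$, an elementary computation. You instead take the more natural competitor $\Phi=\Theta\circ\Delta$, which reduces the bound to $\max_{\ket\psi}\|\ketbra{\psi}{\psi}-\Delta\ketbra{\psi}{\psi}\|_1$ and forces you to prove the nontrivial estimate that this quantity is at most $2(n-1)/n$; your rank-one-perturbation argument (Weyl's inequality to get a single strictly positive eigenvalue $\lambda_+$ with $\|\omega\|_1=2\lambda_+$, the secular identity $\sum_i p_i/(\lambda_++p_i)=1$, then AM--HM) is correct and yields a nice self-contained lemma with a clean equality case $p_i\equiv 1/n$, but it is strictly more work than the paper's route requires (be aware that your reuse of the symbol $n$ for the number of nonzero $p_i$ is a harmless but confusing overload; the conclusion survives because $(k-1)/k$ is increasing in $k\le n$). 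Your saturation argument is essentially the paper's, rendered more carefully: the paper only asserts that one of the $N_0$ output states ``has a bigger distance,'' whereas you make this quantitative via $\min_{l<N_0}\sigma_l\le 1/N_0$ and $\|\ketbra{l}{l}-\sigma\|_1=2(1-\sigma_l)$, and you also treat the unitary Fourier transform explicitly through $U\ket{f_l}=\ket{-l\bmod N_0}$.
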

\begin{proof}
	We first prove the bound given in the proposition. We need to distinguish two cases. \\
	For $n\leq m$:
	\begin{align}
		\min_{\Phi\in \mathcal{DI} }\max_{\rho}& \left\| \Delta  (\Lambda-\Phi) \rho\right\|_1  \nonumber\\
		&\leq \min_{\tilde{\Phi}\in \mathcal{DI} }\max_{\rho}\left\| \Delta  (\Lambda-\Lambda \Delta \tilde{\Phi}) \rho\right\|_1  \nonumber\\
		&= \min_{\tilde{\Phi}\in \mathcal{DI} }\max_{\rho}\left\| \Delta \Lambda (\id_n-\Delta \tilde{\Phi}) \rho\right\|_1  \nonumber\\
		&\leq  \min_{\tilde{\Phi}\in \mathcal{DI} }\max_{\rho}\left\| (\id_n-\Delta \tilde{\Phi}) \rho\right\|_1  \nonumber\\
		&\leq 	\max_{\rho} \left\| \rho- \frac{\id_n}{n}\right\|_1 \nonumber\\
		&= \left\|\ketbra{1}{1}-\frac{\id_{n}}{n}\right\|_1 \nonumber\\
		&=(1-1/n)+(n-1)/n \nonumber\\
		&=\frac{2 (n-1)}{n}.
	\end{align}
	For $n\geq m$:
	\begin{align}
		\min_{\Phi\in \mathcal{DI} }\max_{\rho}& \left\| \Delta  (\Lambda-\Phi) \rho\right\|_1 \nonumber\\
		&\leq
		\max_{\rho} \left\| \Delta \Lambda \rho-\frac{\id_m}{m}\right\|_1 \nonumber\\
		&	\leq \left\|\ketbra{1}{1}-\frac{\id_{m}}{m}\right\|_1 \nonumber\\
		&=(1-1/m)+(m-1)/m \nonumber\\
		&=\frac{2 (m-1)}{m}.
	\end{align}
	That the Fourier transform $(FT)$, or the measurement in the Fourier basis, saturate the bound follows from the fact that inputting the $N_0$ states that are sent to the respective orthogonal incoherent states, one gets $\Delta \Phi \rho=\Delta \Phi\Delta_{N_0}  \rho=\Delta \Phi\frac{\id_{N_0}}{N_0}$, where we assumed without loss of generality that $FT$ acts non-trivially on the span of the first $N_0$ states and $\id_{N_0}$ denotes the identity on this space (and the first equality comes from $\Phi\in \mathcal{DI}$). Assuming that $\Phi$ does not act as the identity superoperator for these test-states, results in one of the respective resulting states having a bigger distance than $\frac{2 (N_0-1)}{N_0}$. Therefore the distance is at least given by $\frac{2 (N_0-1)}{N_0}$; i.e. the Fourier transform saturates the bound.
\end{proof}

\section{Semidefinite program for the diamond-measure.}\label{secA:SemiDia}
For a quantum operation $\Theta=\Theta^{B\leftarrow A}$, we define its corresponding Choi state~\cite{CHOI1975285,PhysRevA.87.022310} by
\begin{align}\label{eqA:Choi}
	J(\Theta)=\sum_{i,j}\Theta(\ketbra{i}{j})\otimes \ketbra{i}{j}.
\end{align}
The diamond-measure can be calculated efficiently using the semidefinite program	
\begin{alignat}{4}
	&\qquad \emph{Primal }	&& \emph{problem} 						&&\qquad  \quad \emph{Dual }&& \emph{problem} \nonumber \\
	&\text{minimize:}		&&2\| \tr_B(Z)\|_\infty					&& \text{maximize:}			&&2\left(\tr(J(\Delta\Theta )X)-\tr(Y_2)\right) \nonumber \\
	&\text{subject to:}		&&Z\ge J(\Delta\Theta)-W, 	\qquad		&& \text{subject to:}		&&X\le \id_B\otimes\rho  : \rho\ge0, \tr(\rho)=1,\nonumber \\
	& 						&&[\id-\Delta]W=0, 						&&							&&[\id-\Delta]Y_1-X+\id_B\otimes Y_2\ge 0, \nonumber \\
	& 						&&\tr_B(W)=\id_A,						&&							&&X\ge0, \nonumber \\
	& 						&&Z\ge 0,								&&							&&Y_1=Y_1^\dagger,	 \nonumber \\
	&						&&W \ge 0, 								&&							&&Y_2=Y_2^\dagger,	
\end{alignat}
which is based on \cite{v005a011}. Strong duality holds. Note that $\tr_B$ is the partial trace over the \textit{first} subsystem since $J(\Delta\Theta) \in B\otimes A$ (see Eq.~(\ref{eqA:Choi})).
\begin{proof}
	According to \cite{v005a011}, $\|\Delta \Theta -\Delta \Phi \|_\diamond$ is the optimal value of
	\begin{alignat}{2}
		&\text{minimize:}\quad	&&2\| \tr_B(Z)\|_\infty					\nonumber \\
		&\text{subject to:}		&&Z\ge J(\Delta\Theta-\Delta \Phi),	\nonumber \\
		& 						&&Z\ge 0.			
	\end{alignat}
	Therefore, $M_\diamond(\Theta)=\min_{\Phi \in \mathcal{DI}} \|\Delta \Theta -\Delta \Phi \|_\diamond$ is the optimal value of
	\begin{alignat}{2} \label{initialOpt}
		&\text{minimize:}\quad	&&2\| \tr_B(Z)\|_\infty					\nonumber \\
		&\text{subject to:}		&&Z\ge J(\Delta\Theta-\Delta \Phi),				\nonumber \\
		&						&&\Phi \in \mathcal{DI}, \nonumber \\
		& 						&&Z\ge 0.						
	\end{alignat}
	For $\Phi\in \mathcal{DI}$, we find
	\begin{align}
		J(\Delta \Phi)=	& J(\Delta \Phi \Delta) \nonumber \\
		=	& \sum_{i,k} \Phi_{k,k}^{i,i} \ketbra{k}{k}_B\otimes \ketbra{i}{i}_A
	\end{align}
	with $\Phi_{k,k}^{i,i}=p(k|i)$ according to Eq.~(\ref{m-inc cara}).  Thus (\ref{initialOpt}) is equivalent to
	\begin{alignat}{2}
		&\text{minimize:}\quad	&&2\| \tr_B(Z)\|_\infty					\nonumber \\
		&\text{subject to:}		&&Z\ge J(\Delta\Theta)-W,				\nonumber \\
		& 						&&[\id-\Delta]W=0, 						\nonumber \\
		& 						&&\tr_B(W)=\id_A,						\nonumber \\
		& 						&&Z\ge 0, 								\nonumber \\
		&						&&W\ge 0,
	\end{alignat}
	which is the primal problem.
	This can be reformulated as
	\begin{alignat}{2}
		&\text{minimize:}\quad	&&a 									\nonumber \\
		&\text{subject to:}		&&a\id_A-2\tr_B(Z)\ge 0,				\nonumber \\
		&						&&Z\ge J(\Delta\Theta)-W,				\nonumber \\
		& 						&&[\id-\Delta]W=0, 						\nonumber \\
		& 						&&\tr_B(W)=\id_A,						\nonumber \\
		& 						&&Z\ge 0, 								\nonumber \\
		&						&&W\ge 0,								\nonumber \\
		&						&&a\ge0.
	\end{alignat}
	The corresponding Lagrangian is given by
	\begin{align}
		L\left(a,Z,W,\tilde{X},X,Y_1,Y_2\right)=& a+\tr\left(\left(2\tr_B Z -a \id_A\right)\tilde{X}\right)+\tr\left(\left(J\left( \Delta \Theta\right)-W-Z\right)X\right) \nonumber \\
		&+ \tr \left( \left[\id -\Delta\right](W)Y_1\right)+\tr\left(\left(\tr_B W -\id_A\right) Y_2\right)
	\end{align}
	and the dual function by
	\begin{align}
		q\left(\tilde{X},X,Y_1,Y_2 \right)=&\inf_{a,Z,W\ge0} L\left(a,Z,W,\tilde{X},X,Y_1,Y_2\right) \nonumber \\
		=&\inf_{a,Z,W\ge0} \tr\left(J\left( \Delta \Theta\right)X\right)-\tr\left(Y_2\right)+a\left(1-\tr \tilde{X}\right)+2\tr\left(\tr_B\left(Z\right)\tilde{X}\right)\nonumber \\
		&-\tr\left(ZX\right)+\tr\left(W Y_1\right) -\tr\left(\Delta\left[W\right]Y_1\right)- \tr\left(W X\right)+\tr\left(\tr_B\left(W\right)Y_2\right).
	\end{align}
	With
	\begin{align}
		\tr\left(\tr_B\left(Z\right)\tilde{X}\right)=\tr\left(Z\left(\id_B\otimes\tilde{X}\right)\right)		
	\end{align}
	and
	\begin{align}
		\tr\left(\Delta\left[W\right]Y_1\right)=\tr\left(W\Delta\left[Y_1\right]\right)
	\end{align}
	follows
	\begin{align}
		q\left(\tilde{X},X,Y_1,Y_2 \right)=&\inf_{a,Z,W\ge0} \tr\left(J\left( \Delta \Theta\right)X\right)-\tr\left(Y_2\right)+ a\left(1-\tr \tilde{X}\right)\nonumber \\
		&+\tr \left(Z\left(2\id_B\otimes\tilde{X}-X\right)\right)+\tr\left(W\left(Y_1-\Delta Y_1 -X+\id_B\otimes Y_2\right)\right) \nonumber \\
		=&\begin{cases} \tr\left(J\left( \Delta \Theta\right)X\right)-\tr\left(Y_2\right) &\text{if } \tr\tilde{X}\le 1 \land 2 \id_B\otimes\tilde{X}-X\ge 0 \land Y_1-\Delta Y_1-X+\id_B\otimes Y_2 \ge 0, \\
			-\infty &\text{else. } \end{cases}
	\end{align}
	Thus the dual problem is given by
	\begin{alignat}{2}
		&\text{maximize:}\quad	&&\tr\left(J\left( \Delta \Theta\right)X\right)-\tr\left(Y_2\right)									\nonumber \\
		&\text{subject to:}		&&\tr \tilde{X}\le 1,				\nonumber \\
		&						&&2 \id_B\otimes \tilde{X} -X\ge 0,  				\nonumber \\
		& 						&&Y_1-\Delta Y_1-X+\id_B\otimes Y_2 \ge 0,			\nonumber \\
		& 						&&\tilde{X}\ge0,						\nonumber \\
		& 						&&X\ge 0, 								\nonumber \\
		&						&&Y_1=Y_1^\dagger,					\nonumber \\
		&						&&Y_2=Y_2^\dagger.		
	\end{alignat}
	Assume $\tilde{X}\ge 0$ and $\tr\tilde{X}<1$. Then $\tilde{X}':=\frac{1}{\tr\tilde{X}}\tilde{X}=\left(1+c\right)\tilde{X}$ has trace one, is positive semidefinite and
	\begin{align}
		2\id_B\otimes\tilde{X}'-X=2\id_B\otimes\tilde{X} -X+2c\id_B\otimes\tilde{X}
	\end{align}
	is positive semidefinite for all $X$ that satisfy $2 \tilde{X} \otimes \id_B-X\ge 0$. Thus we can simplify the dual problem to
	\begin{alignat}{2}
		&\text{maximize:}\quad	&&\tr\left(J\left( \Delta \Theta\right)X\right)-\tr\left(Y_2\right)									\nonumber \\
		&\text{subject to:}		&&X \le 2 \id_B\otimes\rho  : \rho\ge0, \tr(\rho)=1,			\nonumber \\
		& 						&&Y_1-\Delta Y_1-X+ \id_B \otimes Y_2\ge 0,			\nonumber \\
		& 						&&X\ge 0, 								\nonumber \\
		&						&&Y_1=Y_1^\dagger,					\nonumber \\
		&						&&Y_2=Y_2^\dagger.	
	\end{alignat}
	Finally we can define $X'=\frac{1}{2}X$, $Y_1'=\frac{1}{2}X$ and $Y_2'=\frac{1}{2}X$ to arrive at the dual problem stated.
	
	To show that strong duality holds, we write the primal problem as
	\begin{alignat}{2}
		&\text{minimize:}\quad	&&2\| \tr_B(Z)\|_\infty					\nonumber \\
		&\text{subject to:}		&&J(\Delta\Theta)-W-Z\le 0,			\nonumber \\
		& 						&&-Z\le 0, 								\nonumber \\
		&						&&-W\le 0,									\nonumber \\
		& 						&&[\id-\Delta]W=0, 						\nonumber \\
		& 						&&\tr_B(W)=\id_A.						
	\end{alignat}
	According to \cite{boyd2004convex}, strong duality holds if there exist $Z',W'$ such that the equality constraints are satisfied and the inequality constraints are strictly satisfied. If we choose
	\begin{align}
		Z'=\id_B\otimes \id_A+J(\Delta \Theta), \nonumber \\
		W'=\frac{1}{\dim B}\id_B\otimes \id_A,		
	\end{align}
	this is obviously the case.
\end{proof}

\section{Operational interpretation of the nSID-measure}\label{secA:OpIntBias}
In the following, we complete the proof of the operational interpretation of the nSID-measure given in the main text.
What remains to show is the identity
\begin{align}
	P_c(1/2,\Theta_0,\Theta_1)=& \frac{1}{2}+ \frac{1}{4} \max_{\ket{\psi}} \|\Delta\left(\Theta_0-\Theta_1\right) \ketbra{\psi}{\psi} \|_1.
\end{align}
First we show the following proposition, which is a special case of results in~\cite{Matthews2009}. For completeness, we give a direct proof.
\setcounter{theorem}{15}
\begin{proposition}
	Assume you obtain a single copy of a quantum state which is with probability $\lambda$ equal to $\rho_0$ and with probability $1-\lambda$ equal to $\rho_1$. The optimal probability $P_c(\lambda,\rho_0,\rho_1)$ to correctly guess $i=0,1$ when one can perform only incoherent measurements is given by
	\begin{align}
		P_c(\lambda,\rho_0,\rho_1)=& \frac{1}{2}+\frac{1}{2} \| \Delta \left(\lambda \rho_0-(1-\lambda)\rho_1\right)\|_1. \nonumber
	\end{align}
\end{proposition}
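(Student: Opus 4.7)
The plan is to adapt the standard Helstrom-type argument for binary state discrimination to the setting where only free POVMs are allowed, and to exploit the fact that free POVMs commute (in the statistical sense) with the dephasing map $\Delta$. The starting point is Proposition~\ref{propA:mincPOVM}, which tells us that any free two-outcome POVM $\{P_0,P_1\}$ has $\tr(P_n \rho)=\tr(P_n \Delta \rho)$ for all $\rho$. This will let us replace the unknown states $\rho_0,\rho_1$ by their dephased counterparts $\Delta \rho_0,\Delta \rho_1$ at no cost.

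Concretely, I would begin by writing the success probability as
\begin{align}
P_c(\lambda,\rho_0,\rho_1) = \max_{\{P_0,P_1\}\text{ free}} \bigl[\lambda \tr(P_0\rho_0) + (1-\lambda)\tr(P_1\rho_1)\bigr],
\end{align}
substitute $P_1 = \id - P_0$ and use the free-POVM identity above to rewrite this as
\begin{align}
P_c(\lambda,\rho_0,\rho_1) = (1-\lambda) + \max_{\substack{0\le P_0\le \id \\ P_0\text{ free}}} \tr\!\bigl(P_0\, \Delta[\lambda\rho_0-(1-\lambda)\rho_1]\bigr).
\end{align}

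The heart of the argument is then the optimization of $\tr(P_0 X)$ over free effects $P_0$, where $X := \Delta[\lambda\rho_0-(1-\lambda)\rho_1]$ is Hermitian and, crucially, diagonal in the incoherent basis. Writing its Jordan decomposition $X = X_+ - X_-$, both $X_\pm$ are diagonal in the incoherent basis, so the optimal choice $P_0 = \Pi_+$, the projector onto the support of $X_+$, is itself a sum of incoherent-basis projectors and hence a free effect by Proposition~\ref{propA:mincPOVM}. The maximum therefore equals $\tr(X_+) = \tfrac{1}{2}(\|X\|_1 + \tr X)$, and since $\tr X = \tr[\lambda\rho_0-(1-\lambda)\rho_1] = 2\lambda-1$, combining this with the displayed equation above yields exactly $\tfrac{1}{2} + \tfrac{1}{2}\|\Delta[\lambda\rho_0-(1-\lambda)\rho_1]\|_1$.

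There is no serious obstacle: the only subtlety is observing that the usual Helstrom optimizer, which in general could involve projectors onto non-incoherent eigenspaces, here automatically lies in the free set because the operator being optimized against is diagonal in the incoherent basis after applying $\Delta$. The identity then transfers directly to the proof of the operational interpretation of $\tilde{M}_\diamond$ by taking $\lambda = 1/2$, $\rho_i = \Theta_i\ketbra{\psi}{\psi}$, and maximizing over the input state $\ket{\psi}$ on both sides.
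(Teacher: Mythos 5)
Your proof is correct and follows essentially the same route as the paper's: reduce to a dichotomic free POVM, use the diagonal structure of free effects to turn the optimization into one over the diagonal of $\Delta[\lambda\rho_0-(1-\lambda)\rho_1]$, and identify the maximum with the positive part of that operator. The only cosmetic difference is that you package the final step via the identity $\tr(X_+)=\tfrac{1}{2}(\|X\|_1+\tr X)$, whereas the paper sums explicitly over the index set of positive diagonal entries; both yield the same computation.
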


\begin{proof}
	The optimal strategy to correctly guess $i$ is based on the outcome of an dichotomic POVM $\{P_0,P_1=\id-P_0\}$ in the set of allowed measurements where we guess $i$ whenever we measured $i$. This can be seen by the following arguments: In the end, we have to make a dichotomic guess. This can only be based on a the outcomes of an (not necessarily dichotomic) incoherent POVM. In principle, we could post-process the measurement outcomes in a stochastic manner to arrive at our dichotomic guess. However, this stochastic post-processing can be incorporated into the definition of a new incoherent POVM. In addition, an optimal strategy includes the usage of all information obtainable, therefore the task consists in finding an optimal $P_0$.
	
	Let us define
	\begin{align}
		\rho=\lambda \rho_0+(1-\lambda) \rho _1, \nonumber \\
		X= \lambda \rho_0-(1-\lambda)\rho_1
	\end{align}
	and $\mathcal{I}=\left\{ i: X_{i,i}=\bra{i}X\ket{i}>0 \right\}$.
	For a fixed and not necessarily optimal $P_0$, the probability $P_c(P_0;\lambda,\rho_0,\rho_1)$ to guess correctly is then given by
	\begin{align}
		P_c(P_0;\lambda,\rho_0,\rho_1)=&\lambda \tr \left(P_0\rho_0\right)+(1-\lambda)\tr \left(P_1 \rho_1\right) \nonumber \\
		=&\tr \left(P_0 \frac{\rho+X}{2}\right) +\tr \left(P_1 \frac{\rho-X}{2}\right) \nonumber \\
		=& \tr \left[ \left(P_0+P_1\right)\frac{\rho}{2}+\left(P_0-P_1\right)\frac{X}{2}\right] \nonumber \\
		=& \frac{1}{2}+ \tr \left[ \left(2P_0-\id\right)\frac{X}{2}\right] \nonumber \\
		=& \frac{1}{2}+ \tr \left[P_0X \right] -\frac{1}{2}(\lambda-(1-\lambda))\nonumber \\
		=& (1-\lambda)+ \sum_i P_i^0 X_{i,i}
	\end{align}
	and
	\begin{align}
		P_c(\lambda,\rho_0,\rho_1) =&  \max_{P_0} P_c(P_0;\lambda,\rho_0,\rho_1) \nonumber \\
		=&(1-\lambda)+\max_{P_0}  \sum_i P_i^0 X_{i,i} \nonumber \\
		=&(1-\lambda) + \sum_{i\in \mathcal{I}} X_{i,i}.
	\end{align}
	In addition,
	\begin{align}
		\frac{1}{2}+\frac{1}{2} \| \Delta \left(\lambda \rho_0-(1-\lambda)\rho_1\right)\|_1 =&  \frac{1}{2}+\frac{1}{2} \left\| \sum_i \left(\lambda \rho_{i,i}^0-(1-\lambda)\rho_{i,i}^0\right)\ketbra{i}{i} \right\|_1 \nonumber \\
		=&  \frac{1}{2}+\frac{1}{2}  \sum_i \left| \lambda \rho_{i,i}^0-(1-\lambda)\rho_{i,i}^0\right| \nonumber \\
		=&  \frac{1}{2}+\frac{1}{2}  \sum_i \left| X_{i,i}\right| \nonumber \\
		=&  \frac{1}{2}+\frac{1}{2} \left[ \sum_{i\in \mathcal{I}} X_{i,i} -\sum_{i\in \mathcal{I}^c} X_{i,i}  \right]\nonumber \\
		=&  \frac{1}{2}+\frac{1}{2} \left[ 2\sum_{i\in \mathcal{I}} X_{i,i} -\sum_{i} X_{i,i}  \right]\nonumber \\
		=&  \frac{1}{2}+\frac{1}{2} \left[ 2\sum_{i\in \mathcal{I}} X_{i,i} -(2\lambda-1)  \right]\nonumber \\
		=&(1-\lambda) + \sum_{i\in \mathcal{I}} X_{i,i},
	\end{align}
	which finishes the proof.
\end{proof}

This allows us to obtain a slightly more general result than needed.

\begin{proposition}
	Assume you obtain a single copy of a quantum channel which is with probability $\lambda$ equal to $\Theta_0$ and with probability $1-\lambda$ equal to $\Theta_1$. The optimal probability $P_c(\lambda,\Theta_0,\Theta_1)$ to correctly guess $i=0,1$ if one can perform only incoherent measurements is given by
	\begin{align}
		P_c(\lambda,\Theta_0,\Theta_1)=& \frac{1}{2}+ \frac{1}{2} \max_{\ket{\psi}} \|T \ketbra{\psi}{\psi} \|_1 \nonumber \\
	\end{align}
	for
	\begin{align}
		T=\Delta\left[\lambda\Theta_0-\left(1-\lambda\right)\Theta_1\right].
	\end{align}
\end{proposition}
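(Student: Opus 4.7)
The plan is to reduce the channel discrimination problem to the already-proven state discrimination result (Prop.~16) applied on an extended system, and then to show that entangling with an ancilla does not help under free measurements. Concretely, the most general single-shot protocol prepares some input state $\sigma^{AR}$ on the system $A$ tensored with an arbitrary ancilla $R$, feeds $A$ into the unknown channel to obtain $\rho_i^{BR}=(\Theta_i\otimes\id_R)(\sigma)$ on $BR$, and then performs a free POVM on $BR$. Using Prop.~16 on $BR$, the guessing probability is
\begin{align}
P_c(\lambda,\Theta_0,\Theta_1)=\sup_{\sigma^{AR}}\frac{1}{2}+\frac{1}{2}\bigl\|\Delta^{BR}\bigl[\lambda(\Theta_0\otimes\id_R)-(1-\lambda)(\Theta_1\otimes\id_R)\bigr](\sigma)\bigr\|_1.
\end{align}
By convexity of the trace norm it suffices to optimise over pure $\sigma=\ketbra{\Psi}{\Psi}^{AR}$.

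Next I would exploit $\Delta^{BR}=\Delta^B\otimes\Delta^R$ to rewrite the argument of the trace norm as $(T\otimes\id_R)\bigl[(\id_A\otimes\Delta^R)(\sigma)\bigr]$, where $T=\Delta[\lambda\Theta_0-(1-\lambda)\Theta_1]$. This is exactly the manipulation used in the proof of the tensor monotonicity of $\tilde M_\diamond$. Lemma~\ref{lem:evecM} then provides an eigendecomposition
\begin{align}
(\id_A\otimes\Delta^R)(\sigma)=\sum_{a,b}q_{a|b}p_b\,\ketbra{\phi_{a|b},b}{\phi_{a|b},b},
\end{align}
so the argument of the outer trace norm becomes $\sum_{a,b}q_{a|b}p_b\,T(\ketbra{\phi_{a|b}}{\phi_{a|b}})\otimes\ketbra{b}{b}$.

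The key simplification is that the $R$--components $\ketbra{b}{b}$ are mutually orthogonal, so the trace norm splits as $\sum_b\bigl\|\sum_a q_{a|b}p_b\,T(\ketbra{\phi_{a|b}}{\phi_{a|b}})\bigr\|_1$. Applying the triangle inequality and then bounding each pure-state term by its maximum yields
\begin{align}
\bigl\|(T\otimes\id_R)(\id\otimes\Delta^R)(\sigma)\bigr\|_1\le\Bigl(\sum_{a,b}q_{a|b}p_b\Bigr)\max_{\ket{\psi}}\|T\ketbra{\psi}{\psi}\|_1=\max_{\ket{\psi}}\|T\ketbra{\psi}{\psi}\|_1,
\end{align}
which establishes the upper bound $P_c(\lambda,\Theta_0,\Theta_1)\le \tfrac12+\tfrac12\max_{\ket\psi}\|T\ketbra{\psi}{\psi}\|_1$. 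The matching lower bound is immediate: taking $R$ trivial and $\sigma=\ketbra{\psi^*}{\psi^*}$ for an optimiser $\ket{\psi^*}$ of the right-hand side reduces to the state-discrimination scenario on $B$ and saturates the bound through Prop.~16.

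The only genuine obstacle is justifying that entanglement with an ancilla provides no advantage, which is precisely what the combination of Lemma~\ref{lem:evecM} plus the orthogonal-block structure in $R$ delivers; everything else is bookkeeping and a direct appeal to the state-level result. One small care point is to remark at the outset that free POVMs on $BR$ are defined with respect to $\Delta^{BR}=\Delta^B\otimes\Delta^R$, so that Prop.~16 can indeed be invoked on the composite system.
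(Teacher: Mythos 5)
Your proposal is correct and follows essentially the same route as the paper's proof: reduce to the state-discrimination result on the extended system, rewrite the relevant operator as $(T\otimes\id)(\id\otimes\Delta)\sigma$, invoke Lemma~14 to get the separable eigendecomposition, and bound via the triangle inequality, with the trivial-ancilla choice giving the matching lower bound. The only cosmetic difference is that you split the trace norm over the orthogonal $\ketbra{b}{b}$ blocks before applying the triangle inequality, whereas the paper applies it directly to the full sum and then uses multiplicativity of $\|\cdot\|_1$ on product states; both steps are equivalent in effect.
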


\begin{proof}
	The optimal probability to guess correctly is given by the optimal probability to distinguish $\sigma_0=\left(\Theta_0\otimes \id\right)\sigma$  and $\sigma_1=\left(\Theta_1\otimes \id\right)\sigma$ for optimal $\sigma$ (note that this includes the strategy of applying $\Theta_i \otimes \mathcal{E}$ to $\sigma$ for an quantum channel $\mathcal{E}$). Therefore we have
	\begin{align}
		P_c(\lambda,\Theta_0,\Theta_1)=& \max_\sigma \frac{1}{2}+\frac{1}{2}\| \Delta \left[\lambda\Theta_0\otimes \id-(1-\lambda)\Theta_1\otimes \id\right]\sigma \|_1.
	\end{align}
	However, using Lem.~\ref{lem:evecM}
	\begin{align}
		\max_\sigma \| \Delta \left[\lambda\Theta_0\otimes \id-(1-\lambda)\Theta_1\otimes \id\right]\sigma \|_1=& \max_\sigma \| \left(T\otimes \Delta\right)\sigma \|_1 \nonumber \\
		=&\max_\sigma  \| \left(T\otimes\id\right) \left(\id \otimes \Delta\right)\sigma \|_1 \nonumber \\
		=&\max_{\rho=(\id\otimes\Delta)\sigma} \| \left(T\otimes\id\right) \rho \|_1 \nonumber \\
		=&\max \left\| \sum_{a,b} q_{a|b} p_b \left(T\otimes\id\right) \ketbra{\phi_{a|b},b}{\phi_{a|b},b} \right\|_1 \nonumber \\
		\le&\max \sum_{a,b} q_{a|b} p_b \|  \left(T\otimes\id\right) \ketbra{\phi_{a|b},b}{\phi_{a|b},b} \|_1 \nonumber \\
		\le& \sum_{a,b} q_{a|b} p_b \max_{\ket{\phi}, \ket{i}} \|  \left(T\otimes\id\right) \ketbra{\phi}{\phi} \otimes \ketbra{i}{i} \|_1 \nonumber \\
		=&\max_{\ket{\phi}, \ket{i}} \|  T \ketbra{\phi}{\phi} \|_1 \ \|\ketbra{i}{i} \|_1 \nonumber \\
		=&\max_{\ket{\phi}} \|  T \ketbra{\phi}{\phi} \|_1.
	\end{align}
\end{proof}

\section{Evaluating the nSID-measure}\label{secA:eval-bias}
In this section, we show how the nSID-measure can be evaluated. We first give an overview of the main ideas and steps.
From the definition of the trace norm in the main text and its convexity follows directly that
\begin{align}\label{SID}
	\tilde{M}_\diamond(\Theta)=\min_{\Phi \in \mathcal{DI}} \max_{\rho} \tr \left|	\Delta (\Theta-\Phi)\rho 	\right|,
\end{align}
where the maximization is over mixed states.
The idea is to separate this into an inner and an outer program and to use them iteratively. The outer program is given by
\begin{align}
	\min_{\Phi \in \mathcal{DI}}\max_{\rho \in D(n)}
	\tr \left|
	\Delta (\Theta-\Phi)\rho
	\right|,
\end{align}
where $D(n)$ is a discrete set of states. Due to this discreteness, we can rephrase this program as a linear program. Since the only difference to the original program is that we discretized the set of states, its optimal value yields a lower bound to the optimal value of~(\ref{SID}). Using the optimal $\Phi_n$ which achieves this bound, we can then calculate a $\rho_n$ as the optimal point of the inner program
\begin{align}
	\max_{\rho} \tr \left|	\Delta (\Theta-\Phi_n)\rho 	\right|.
\end{align}
This problem can be formulated as a branch and bound problem with semidefinite branches. Since the only difference to the original program is that we do not minimize over $\Phi$, we get an upper bound to the optimal value of~(\ref{SID}). Adding to $D(n)$ a basis which is obtained by rotating $\rho_n$ around the incoherent axis, we obtain $D(n+1)$.
This new set is used as the input for the next iteration.
Once the bounds coincide to the required accuracy, the problem is solved. In the following, we present the missing details.

First we formulate the program for the upper bound,
\begin{align}\label{Inner}
	\max_{\rho }
	\tr \left|
	\Delta (\Theta-\Phi_n)\rho
	\right|,
\end{align}
as a branch and bound problem.
Using the short hand notation $J_n=J(\Delta (\Theta-\Phi_n))$, we will show that the optimal value of the above optimization problem is equivalent to the optimal value of
\begin{alignat}{2} \label{final_inner}
	&\text{minimize:}\qquad 	&& -2 \tr[X J_n] \nonumber \\
	&\text{subject to:}			&&X=\bigoplus_i \rho_i,\nonumber\\
	&							&&0\leq\rho_i\leq \rho, \nonumber \\
	&							&&\tr[\rho_i]=B(i),\nonumber\\
	& 							&&\rho\geq 0, \nonumber\\
	&							&&\tr(\rho)=1,\nonumber\\
	&							&&B(i) \in \{0,1\}.
\end{alignat}
Note that for fixed $B$, this is a semidefinite program. We thus just need to minimize the different programs over the possible choices of $B$.

We first show now that~(\ref{Inner}) is equivalent to
\begin{alignat}{2}\label{inter_upper}
	&\text{minimize:}\qquad	&& -2 \tr[P \tr_2[(\id\otimes\rho) J_n] ] \nonumber \\
	&\text{subject to:} 	&&\rho\geq 0, \nonumber \\
	&						&&\tr(\rho)=1,\nonumber\\
	&						&&P^2=P,\nonumber \\
	&						&&P\geq 0,
\end{alignat}
where the last line means that $P$ is a projector, and the minimization implies that the optimal $P_0$ for any fixed state $\rho_0$ is the projector onto the positive part of $\tr_2[(\id\otimes\rho_0) J_n]=(\Delta (\Theta-\Phi_n))[\rho_0]$. Now we note that
\begin{align}
	\tr[P_0 (\Delta (\Theta-\Phi_n))[\rho_0] ]=-\tr[(\id -P_0) (\Delta (\Theta-\Phi_n))[\rho_0] ],
\end{align}
since $(\Delta (\Theta-\Phi_n))$ is the difference of two trace preserving maps. This implies that
\begin{align}
	2 \tr[P_0 \tr_2[(\id\otimes\rho_0) J_n] ]= \tr \left|
	(\Delta (\Theta-\Phi_n))[\rho_0]
	\right|,
\end{align} for any fixed state $\rho_0$, which gives the equivalence to~(\ref{Inner}).

Since $(\Delta (\Theta-\Phi_n))[\rho_0]=\tr_2[(\id\otimes\rho_0) J_n]$ is diagonal in the incoherent basis, the optimal $P_0$ for~(\ref{inter_upper}) is diagonal as well. Then we can restrict the optimization to these $P$. But diagonal $P$ can be rewritten as $P=\diag(B)$, with $B$ a vector with components either $0$ or $1$. The next step is to write
\begin{align}
	\tr[P \tr_2[(\id\otimes\rho) J_n]]=\tr[(P\otimes\id)(\id\otimes\rho) J_n]=\tr[(P\otimes\rho)J_n]=\tr[(\diag(B)\otimes\rho)J_n]=\tr[(\oplus_i B(i)\rho ) J_n].
\end{align} This means that~(\ref{Inner}) is equivalent to the problem
\begin{alignat}{2}
	&\text{minimize:}\qquad	&& -2 \tr[X J_n]  \nonumber\\
	&\text{subject to:}		&&X=\bigoplus_i B(i) \rho, \nonumber\\
	&						&&\rho\geq 0,\nonumber \\
	&						&&\tr(\rho)=1,\nonumber\\
	&						&&B(i) \in \{0,1\}
\end{alignat}
or
\begin{alignat}{2}\label{inter_upper2}
	&\text{minimize:}\qquad		&& -2 \tr[X J_n] \nonumber \\
	&\text{subject to:}			&&X=\bigoplus_i B(i) \tilde{\rho_i}, \nonumber\\
	&							&&B(i) \tilde{\rho_i}= B(i) \rho, \nonumber \\
	&							&&\tr[B(i)\tilde{\rho_i}]=B(i), \nonumber\\
	&							&&\rho\geq 0, \nonumber \\
	&							&&\tr(\rho)=1,\nonumber\\
	&							&&B(i) \in \{0,1\}.
\end{alignat}
Note that the constraint $\tr[B(i)\tilde{\rho_i}]=B(i)$ in the above program is always satisfied if the other constraints hold.
We arrive at~(\ref{final_inner}) by defining $B(i)\tilde{\rho_i}=\rho_i$  and relaxing the constraint $\rho_i= B(i)\rho$ to $0\leq\rho_i\leq \rho$. On the other hand, given the constraints of~(\ref{final_inner}) are valid, if $B(i)=0$, from $0\leq\rho_i$ and $\tr[\rho_i]=B(i)=0$ it follows that $\rho_i=0$ and therefore $\rho_i=0= B(i) \rho$. If $B(i)=1$, $\rho-\rho_i$ is a traceless hermitian operator and by the condition $\rho_i\leq \rho$ it is also positive. Hence it is zero and $\rho_i= \rho=B(i) \rho$. This proves the equivalence of~(\ref{final_inner}) to~(\ref{inter_upper2}) and finally to~(\ref{Inner}).

Next we formulate the linear program for the lower bound.
The outer problem, giving the lower bound, is given by
\begin{alignat}{2}\label{outer}
	&\text{minimize:}\qquad&& \max_{\rho_i \in D(n)}
	\tr \left|
	(\Delta (\Theta-\Phi))[\rho_i]
	\right| \nonumber \\
	&\text{subject to:}&&\Phi\in \mathcal{DI}.
\end{alignat}
First we note that $\Phi\in \mathcal{DI}$ implies that $S:=J(\Delta \Phi)$
is diagonal and therefore only defined by the transition probabilities $p(k|l)$ from the populations of the input states to the ones of the output states. Secondly we can calculate $\sigma_i=\Delta \Theta [\rho_i]$ for each $\rho_i \in D(n)$. Then we see that the only quantities that matter are the diagonal elements $r_i$ of $\rho_i$, and $s_i$ of $\sigma_i$ and we get the program
\begin{alignat}{2}\label{outer_intermediate}
	&\text{minimize:} \qquad&& \max_{i}
	\sum_k \left|
	s_i(k) - \sum_l p(k|l) r_i(l)
	\right| \nonumber \\
	&\text{subject to:} && \sum_k p(k|l)=1\forall l,\nonumber\\
	&					&&p(k|l)\geq 0\forall k,\,l.
\end{alignat}
Since both $\Delta\Theta$ and $\Delta\Phi$ are trace preserving operations,
\begin{align}
	\tr \left| (\Delta (\Theta-\Phi))[\rho_i] \right|=2 \tr \left( \mathrm{Pos}\left((\Delta (\Theta-\Phi)\right)[\rho_i]) \right)
\end{align}
where $\mathrm{Pos}$ denotes the positive part of an operator. From this follows that~(\ref{outer_intermediate}) is equivalent to
\begin{alignat}{3}\label{outer_final}
	&\text{minimize:}\qquad	&& 2 x \nonumber \\
	&\text{subject to:} 	&& x\geq \sum_k T_{k,i} 						&&\forall i,\nonumber\\
	&						&&T_{k,i}\geq 0 								&&\forall i,\,k,\nonumber\\
	&						&& T_{k,i}\geq s_i(k) - \sum_l p(k|l) r_i(l) 	&&\forall i,\,k,\nonumber\\
	&						&&\sum_k p(k|l)=1								&&\forall l,\nonumber\\
	&						&&p(k|l)\geq 0									&&\forall k,\,l.
\end{alignat}
For $N=|D(n)|$ and $\rho_i\in D(n)$, introducing the matrices $R\in \mathbb{R}^{d_{\text{in}}\times N }$ by $R(:,i)=\diag(\rho_i)$,
and $S\in \mathbb{R}^{d_{\text{out}}\times N }$ by $S(:,i)=\diag(\Delta\Theta[\rho_i])$ leads to
\begin{alignat}{3}\label{outer_matrix}
	&\text{minimize:}\qquad	&& 2x \nonumber \\
	&\text{subject to:} 	&& x\geq \sum_kT_{k,i} 							&&\forall i,\nonumber\\
	&						&&T_{k,i}\geq 0 								&&\forall i,\,k,\nonumber\\
	&						&& T_{k,i}\geq S_{k,i} - (P R)_{k,i} 			&&\forall i,\,k,\nonumber\\
	&						&&\sum_k P_{k,l}=1								&&\forall l,\nonumber\\
	&						&&P_{k,l}\geq 0									&&\forall k,\,l,\nonumber\\
	&						&&T\in \mathbb{R}^{d_{\text{out}}\times N },\nonumber\\
	&						&& P\in \mathbb{R}^{d_{\text{out}}\times d_{\text{in}} }.
\end{alignat}

\section{Examples}\label{secA:examples}
In this section, we calculate the measures introduced in the main text for two quantum channels acting on qutrits.
For the first example, we mix the total dephasing operation $\Delta$ from the main text, which is free, with the quantum Fourier transformation $FT$, which is most valuable according to the nSID-measure (see Prop.~\ref{propA:MaxMeas}). For $0\le p\le1$, we denote the resulting map by $\Theta$,
\begin{align}
	\Theta(\rho)=(1-p)\Delta\rho+p\ FT \rho.
\end{align}
Since $\Theta$ is free for $p=0$, both measures are zero in this case. For $p\ne0$, the operation is non-free, leading to non-zero measures. This is shown in Fig.~\ref{fig:Meas}, where it is also shown that in the case of $\Theta$, the two measures are equal within numerical precision. To show that this is not always the case, we present a second example, which is given by
\begin{align}
	\Lambda(\rho)=(1-p)\Delta\rho+p \sum_{n=1}^{3}K_n\rho K_n^\dagger,
\end{align}
where again $0\le p\le1$ and
\begin{align}
	K_{1}  =\frac{1}{\sqrt{4}}\begin{pmatrix}
		-1 & 1 & 0\\
		0 & 0 & 0 \\
		1 & 1 & 0 \\
	\end{pmatrix},
	K_{2}=\frac{1}{\sqrt{4}}\begin{pmatrix}1 & 0 & -1\\
		1 & 0 & 1 \\
		0 & 0 & 0 \\
	\end{pmatrix},
	K_{3}  =\frac{1}{\sqrt{4}}
	\begin{pmatrix}0 & -1 & 1\\
		0 & 0 & 0 \\
		0 & 1 & 1 \\
	\end{pmatrix}.
\end{align}
As shown in Fig.~\ref{fig:Meas}, the two measures are different for $\Lambda$ and the diamond-measure is, for given $p$, larger than the nSID-measure. In general, as can be seen directly form the definitions, the diamond-measure is an upper bound to the nSID-measure.
As expected from Prop.~\ref{propA:MaxMeas}, $\Theta$ is more "valuable" than $\Lambda$ for the same $p$.

\begin{figure}[h!]
	\centering \includegraphics[width=0.7\linewidth]{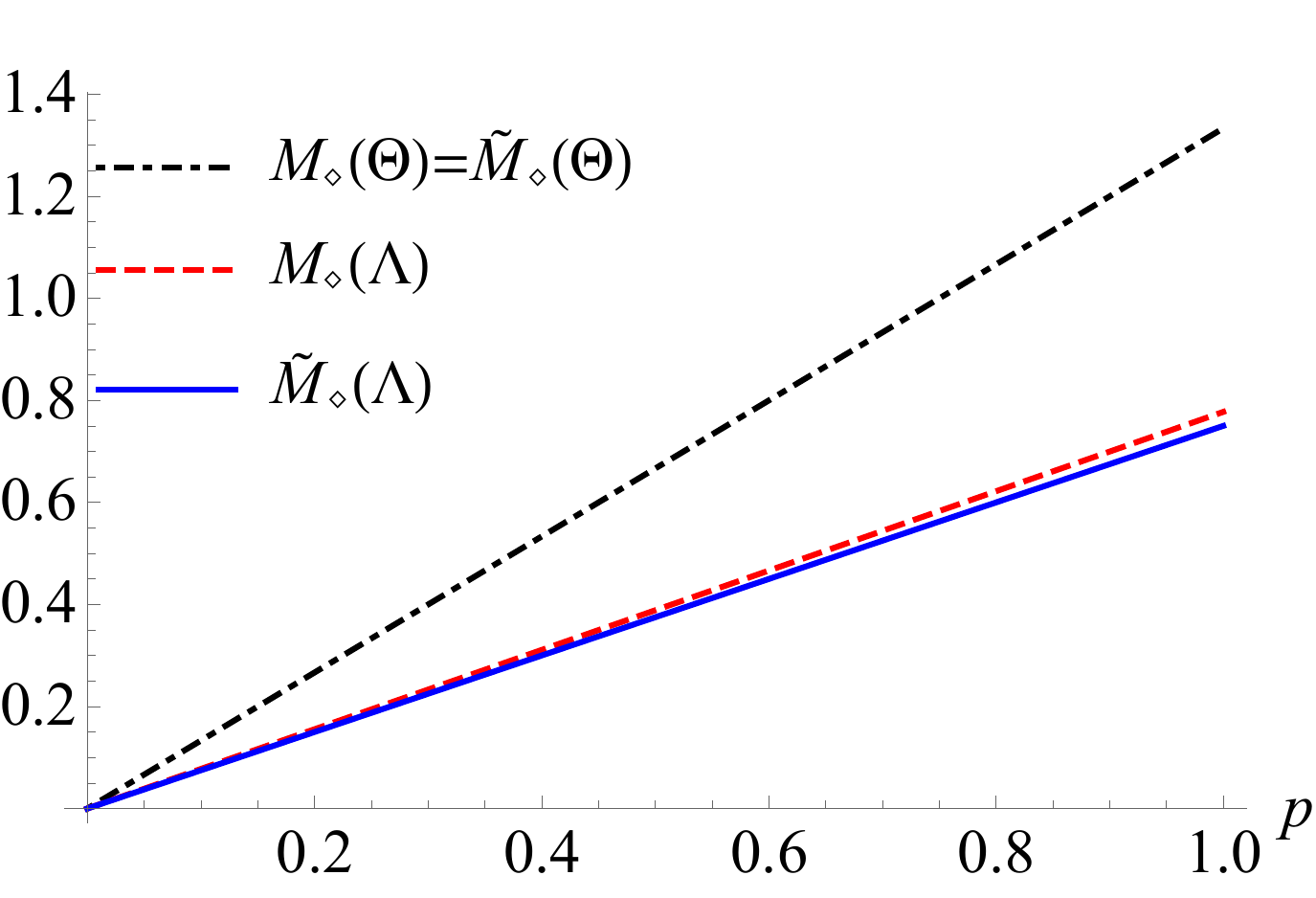} \caption{In this figure, the diamond-measure $M_\diamond$ and the nSID-measure $\tilde{M}_\diamond$ are plotted for the two exemplary quantum operations $\Theta$ and $\Lambda$.}
	\label{fig:Meas}
\end{figure}

\section{A measure for detection-creation incoherent operations} \label{secA:CreBoth}
Finally we want to give at least one example for a measure in the detection-creation-incoherent setting, which has been introduced in \cite{PhysRevLett.110.070502}. Therefore we denote by $S(\cdot \| \cdot)$ the (quantum) relative entropy.
\begin{proposition} \label{prop:DattaMeasure}
	The functional $M_c$ defined as \cite{PhysRevLett.110.070502}
	\begin{align}
		M_c\left(\Theta\right)= \sup_\rho S\Big(\Theta \Delta\rho \Big\| \Delta \Theta \rho\Big)
	\end{align}
	is a measure in the detection-creation-incoherent setting.
\end{proposition}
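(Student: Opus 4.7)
The plan is to verify the three axioms from Definition~\ref{def:Monotone} using Proposition~\ref{prop:Monotone} to reduce monotonicity to the three elemental super-operations (plus convexity). The workhorses throughout are the data processing inequality and joint convexity of the quantum relative entropy.

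For faithfulness, I would observe that if $\Theta$ is detection-creation-incoherent then $\Theta\Delta=\Delta\Theta$, so $\Theta\Delta\rho=\Delta\Theta\rho$ for every $\rho$ and hence $M_c(\Theta)=0$. Conversely, $M_c(\Theta)=0$ forces $S(\Theta\Delta\rho\|\Delta\Theta\rho)=0$ for all $\rho$, which by positive-definiteness of the relative entropy gives $\Theta\Delta\rho=\Delta\Theta\rho$ on all incoherent (and hence, by linearity, on all) states. Thus $\Theta$ is free.

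For monotonicity under $\mathcal{E}_{1,\Phi}$ and $\mathcal{E}_{2,\Phi}$, I would use that $\Phi$ DC-incoherent commutes with $\Delta$. For pre-composition, $M_c(\Theta\Phi)=\sup_\rho S(\Theta\Delta\Phi\rho\|\Delta\Theta\Phi\rho)$, and substituting $\rho'=\Phi\rho$ (a valid state) gives $M_c(\Theta\Phi)\le M_c(\Theta)$. For post-composition, $M_c(\Phi\Theta)=\sup_\rho S(\Phi\Theta\Delta\rho\|\Phi\Delta\Theta\rho)$, and data processing under the CPTP map $\Phi$ gives $M_c(\Phi\Theta)\le M_c(\Theta)$. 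Convexity of $M_c$ follows from joint convexity of $S(\cdot\|\cdot)$ applied to $\Theta=t\Theta_1+(1-t)\Theta_2$, noting that both arguments of the relative entropy are linear in $\Theta$, and then taking the supremum.

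The main obstacle is the ancilla step $\mathcal{E}_{3,\id}$. Here I would use $\Delta_{AB}=\Delta_A\otimes\Delta_B$, so that for any $\sigma^{AB}$ the two relevant states become
\begin{align}
X&=(\Theta\Delta_A\otimes\Delta_B)\sigma, & Y&=(\Delta_A\Theta\otimes\Delta_B)\sigma.
\end{align}
Writing $(\id\otimes\Delta_B)\sigma=\sum_j p_j\,\tilde\tau_j\otimes\ketbra{j}{j}$ with $\tilde\tau_j$ normalized states on $A$ and $p_j$ a probability distribution, both $X$ and $Y$ become block-diagonal in the $B$-basis with the \emph{same} weights $p_j$ (because $\Theta$ and $\id$ are trace preserving). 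The standard decomposition of the relative entropy for such block-diagonal states gives
\begin{align}
S(X\|Y)=\sum_j p_j\, S(\Theta\Delta_A\tilde\tau_j\|\Delta_A\Theta\tilde\tau_j)\le M_c(\Theta),
\end{align}
and taking the supremum over $\sigma$ yields $M_c(\Theta\otimes\id)\le M_c(\Theta)$. The delicate point to justify carefully is that the mixing weights coincide, so that the ``classical'' contribution $\sum_j p_j\log(p_j/p_j)$ vanishes and only the convex combination of the $A$-branch relative entropies survives; this is what makes the ancilla step work and completes the proof.
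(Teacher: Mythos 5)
Your proposal is correct, and it reaches the same conclusion by a partly different route. For faithfulness and convexity you argue exactly as the paper does (positive definiteness of $S(\cdot\|\cdot)$ and its joint convexity); note only that once $S(\Theta\Delta\rho\|\Delta\Theta\rho)=0$ for \emph{all} states $\rho$ you already have $\Theta\Delta=\Delta\Theta$ by spanning, so the detour through incoherent states is unnecessary. For monotonicity under pre- and post-composition the paper simply cites Ref.~[43]; you instead give the short direct arguments (substitution $\rho'=\Phi\rho$ using $\Phi\Delta=\Delta\Phi$, and data processing under the CPTP map $\Phi$), which is self-contained and correct. The genuine divergence is in the ancilla step $\mathcal{E}_{3,\id}$: the paper invokes its Lemma on the separable eigenvectors of $(\id\otimes\Delta)\sigma$, applies joint convexity to the resulting pure-product decomposition, and then uses contractivity to strip off the flag $\ketbra{b}{b}$; you instead exploit that both $(\Theta\Delta_A\otimes\Delta_B)\sigma$ and $(\Delta_A\Theta\otimes\Delta_B)\sigma$ are block-diagonal in the $B$-basis with \emph{identical} weights $p_j$ (by trace preservation), so the exact identity $S\bigl(\sum_j p_j X_j\otimes\ketbra{j}{j}\,\big\|\,\sum_j p_j Y_j\otimes\ketbra{j}{j}\bigr)=\sum_j p_j S(X_j\|Y_j)$ immediately bounds everything by $M_c(\Theta)$. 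Your version avoids the eigenvector lemma entirely and replaces two inequalities (joint convexity plus contractivity) by one equality, which is arguably cleaner; the paper's version additionally establishes the reverse bound $M_c(\Theta)\le M_c(\Theta\otimes\id)$ and hence equality, but that extra direction is not needed for the proposition.
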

\begin{proof}
	Monotonicity under left and right composition is shown in \cite{PhysRevLett.110.070502} and faithfulness is given by the property of the relative entropy that $S(\rho \| \sigma)=0$ iff $\rho=\sigma$.
	From Lem.~\ref{lem:evecM}, we know that the eigenvectors of $\id^A \otimes \Delta^B (\rho^{A,B})$ are separable.
	For $\Gamma_b(\rho)=\rho\otimes \ketbra{b}{b}$ and using joint convexity and contractivity, we can then prove
	\begin{align}
		M_c\left(\Theta\otimes \id \right)=&\sup_\rho S\Big((\Theta\otimes\id)\Delta\rho\Big\| \Delta(\Theta\otimes\id) \rho\Big) \nonumber \\
		=&\sup_\rho S\Big(((\Theta \Delta)\otimes\id) (\id\otimes\Delta)\rho\Big\| ((\Delta\Theta)\otimes\id) (\id\otimes\Delta)\rho\Big) \nonumber \\
		=&\sup_{\sigma=(\id\otimes \Delta) \rho} S\Big(((\Theta \Delta)\otimes\id) \sigma\Big\| ((\Delta\Theta)\otimes\id)\sigma\Big) \nonumber \\
		=&\sup_{{q_{a|b},p_b,\ket{\phi_{a|b},b}}} S\left(((\Theta \Delta)\otimes\id) \sum_{a,b} q_{a|b} p_b \ketbra{\phi_{a|b},b}{\phi_{a|b},b}\Bigg\| ((\Delta \Theta)\otimes\id)\sum_{a,b} q_{a|b} p_b \ketbra{\phi_{a|b},b}{\phi_{a|b},b}\right) \nonumber \\
		\le& \sup_{{q_{a|b},p_b,\ket{\phi_{a|b},b}}} \sum_{a,b} q_{a|b} p_b  S\Big(((\Theta \Delta)\otimes\id) \ketbra{\phi_{a|b},b}{\phi_{a|b},b}\Big\| ((\Delta \Theta)\otimes\id)  \ketbra{\phi_{a|b},b}{\phi_{a|b},b}\Big)\nonumber \\
		=& \sup_{\ket{\phi,b}}   S\Big(((\Theta \Delta)\otimes\id)  \ketbra{\phi,b}{\phi,b} \Big\| ((\Delta\Theta)\otimes\id)  \ketbra{\phi,b}{\phi,b}\Big) \nonumber \\
		=& \sup_{\ket{\phi},b}   S\Big(\Gamma_b\Theta \Delta  \ketbra{\phi}{\phi}\Big\| \Gamma_b\Delta \Theta  \ketbra{\phi}{\phi}\Big) \nonumber \\
		\le& \sup_{\ket{\phi}}   S\Big(\Theta \Delta  \ketbra{\phi}{\phi}\Big\| \Delta\Theta  \ketbra{\phi}{\phi}\Big)\nonumber \\
		\le& \sup_{\rho}   S\Big(\Theta \Delta \rho\Big\| \Delta\Theta  \rho\Big)\nonumber \\
		=&M_c(\Theta).
	\end{align}
	
	In addition, we have
	\begin{align}
		M_c(\Theta)=&\sup_\rho S\Big(\Theta  \Delta\rho\Big\| \Delta \Theta \rho\Big) \nonumber \\
		=&\sup_{\rho  } S\Big(\tr_B \left(((\Theta     \Delta  )\otimes \id  )\left(\rho  \otimes \ketbra{1}{1}  \right)\right)\Big\| \tr_B \left( ((\Delta   \Theta  ) \otimes \id  )\left(\rho  \otimes \ketbra{1}{1}  \right)\right)\Big)\nonumber \\
		\le&\sup_{\rho  } S\Big(\left(\left(\Theta     \Delta  \right)\otimes \id  \right)\left(\rho  \otimes \ketbra{1}{1}  \right)\Big\|\left(\left(\Delta   \Theta  \right) \otimes \id  \right) \left(\rho  \otimes \ketbra{1}{1}  \right)\Big)\nonumber \\
		=&\sup_{\rho  } S\Big(\left(\Theta   \otimes \id  \right)   \Delta  \left(\rho  \otimes \ketbra{1}{1}  \right)\Big\|\Delta   \left(\Theta   \otimes \id  \right) \left(\rho  \otimes \ketbra{1}{1}  \right)\Big) \nonumber \\
		\le&\sup_{\rho  } S\Big((\Theta   \otimes \id )    \Delta  \rho  \Big\|\Delta   ( \Theta   \otimes \id ) \rho  \Big) \nonumber \\
		=&M_c\left(\Theta\otimes \id\right)
	\end{align}
	and thus
	\begin{align}
		M_c(\Theta)=M_c(\Theta\otimes \id).
	\end{align}
	Convexity is given by
	\begin{align}
		M_c\left(\sum_i q_i \Theta_i\right)=&\sup_\rho S\Bigg(\sum_i q_i \Theta_i    \Delta \rho\Bigg\| \sum_i q_i  \Delta   \Theta_i  \rho\Bigg) \nonumber \\
		\le&\sup_\rho \sum_i q_i S\Big( \Theta_i    \Delta \rho\Big\| \Delta   \Theta_i  \rho\Big) \nonumber \\
		\le& \sum_i q_i \sup_\rho S\Big( \Theta_i    \Delta \rho\Big\| \Delta   \Theta_i  \rho\Big) \nonumber \\
		=&\sum_i q_i M_c(\Theta_i).
	\end{align}
\end{proof}

\section{On the need of resource theories on the level of operations}
In this section, we will underpin our claim in the main text that resource theories on the level of operations are
not only meaningful but essential unifying concepts in the theory of resources. Sticking to the spirit of this work, we will do this at the example of coherence.

First, let us clarify why we cannot quantify the coherence of an operation by the coherence of its corresponding Choi state (see Eq.~(\ref{eqA:Choi})).
Conceptually, the question whether this is possible can only be  answered \textit{after} one clarified how to quantify the coherence of operations, which we did in the main text.
The Choi-Jamiolkowski isomorphism~\cite{CHOI1975285,PhysRevA.87.022310} defines a one-to-one mapping between linear maps from $\mathbbm{C}^{n,n}$ to $\mathbbm{C}^{m,m}$ and linear operators on $\mathbbm{C}^{nm}$. This isomorphism allows for many useful associations, including a correspondence between CPTP maps and positive semidefinite operators~\cite{CHOI1975285}. However, this does not imply that we can associate all properties of maps straightforwardly with the ``similar'' properties of their corresponding Choi states. From a mathematical point of view, this should be clear: A priori, an isomorphism is simply a one-to-one mapping, and therefore, there is not necessarily a straight forward correspondence between ``similar'' properties.

In particular, the coherence of a quantum operation, i.e. its ability to create or detect coherence, cannot be identified in a trivial way with the coherence of its Choi state (although this is claimed in \cite{datta2017coherence}). Consider for example the identity channel, which can neither detect nor create coherence. The corresponding Choi state has coherence. However, the total dephasing operation $\Delta$, which is also incapable of detecting or creating coherence, corresponds to an incoherent Choi state. Of course it is possible to map conditions on a quantum operation to restrictions on the corresponding Choi state, which we exploited in the semidefinite program calculating the diamond-measure. Directly applying the isomorphism, it is easy to show that a Choi  state $J$ represents an detection-incoherent operation iff
\begin{align}
	J\ge 0,\qquad \tr_B J=\id_A, \qquad (\Delta_B\otimes\id_A) J=\Delta_{BA}J,
\end{align}
where the first condition ensures complete positivity, the second trace preservation and the third one detection-incoherence. 
There also exist operational settings in which the value of a quantum operation is described by the coherence of the corresponding Choi state, see for example~\cite{bera2018quantifying}. However, these settings are not related to the detection or creation of coherence.

Next we will demonstrate that it is impossible to quantify the coherence detection ability of operations in the known coherence theories that are formulated on the level of states. As discussed in the main text, this serves as an example that there exist resources which cannot be captured on the level of states.
To do this, we first need to clarify when an operation can detect coherence and when not. As demonstrated in the main text, a quantum operation can detect coherence iff it can transform coherences to populations. The coherence generation capacity of a quantum operation mentioned in the introduction of the main text is, as the name says, connected to the generation of coherence and therefore clearly not the figure of merit when we intend to analyze the detection of coherence. Indeed, there are operations that can create coherence but not detect it and vice versa.
The other quantity used to analyze the value of operations is the resource cost. However, as we will show in the following, this quantity is trivial for all sets of incoherent operations appearing in the literature, which we will list in the following.

The largest set of operations that do not generate coherence on average are called maximally incoherent operations (MIO)~\cite{aberg2006quantifying}. A subset of these operations are the incoherent operations (IO), which do not generate coherence even if subselection is allowed~\cite{BaumgratzPhysRevLett.113.140401}. The set of incoherent operations in turn contains the incoherent operations on the wire (LOP), which were introduced in the operationally motivated framework of local operations and physical wires~\cite{egloff2018local}.

Operations that cannot detect coherence on average (see this work and~\cite{PhysRevLett.118.060502}) contain as subsets DIO~\cite{ChitambarPhysRevLett.117.030401,PhysRevA.94.052324,PhysRevA.94.052336,PhysRevA.95.019902,PhysRevLett.110.070502,PhysRevLett.118.060502} introduced in the main text, strictly incoherent operations (SIO)~\cite{WinterPhysRevLett.116.120404}, translationally-invariant operations (TIO)~\cite{gour2008referenceFrames,PhysRevA.94.052324}, physical incoherent operations (PIO)~\cite{ChitambarPhysRevLett.117.030401}, genuinely incoherent operations (GIO)~\cite{1751-8121-50-4-045301}, fully incoherent operations (FIO)~\cite{1751-8121-50-4-045301} and energy preserving operations (EPO)~\cite{PhysRevA.96.022327}. See also the review article~\cite{StreltsovRevModPhys.89.041003} for an overview over these sets of operations and further references.

Let us begin with the resource cost of free operations that cannot detect coherence. 	
Assume that an operation $\Theta$ can be simulated by a fixed operation $\Phi\in \mathcal{DI}$ and the consumption of a fixed state $\rho$ in the sense that
\begin{align}
	\Theta(\sigma)=\Phi(\sigma\otimes\rho) \forall \sigma.
\end{align}
From this follows
\begin{align}\label{eq:NoSimulation}
	\Delta \Theta \sigma =\Delta \Phi (\sigma\otimes\rho)=\Delta\Phi\Delta(\sigma\otimes\rho)=\Delta\Phi\Delta(\Delta\sigma\otimes\rho)=\Delta\Phi(\Delta\sigma\otimes\rho)=\Delta \Theta \Delta \sigma,
\end{align}
where we made use of the fact that $\Phi \in \mathcal{DI}$ means $\Delta \Phi \Delta=\Delta \Phi$.
Since this has to be valid for all $\sigma$ (or equivalently for an unknown $\sigma$) by assumption, we have $\Theta \in \mathcal{DI}$, independent of the choice of $\rho$. Therefore we cannot simulate any operation that can detect coherence and the resource cost for these operations is not defined/infinity, turning the resource cost into a trivial measure. Since FIO, GIO, PIO, SIO and DIO as well as TIO and EFO are subsets of $\mathcal{DI}$, this is also the case for these sets of free operations.

On the contrary, the resource cost is strictly zero for MIO, IO and LOP: Since the ability of an operation to detect coherences depends only on the populations after the operation has been applied, $\Delta\Theta$ can measure coherences equally well as $\Theta$. Now take an arbitrary Kraus decomposition $K_n$ of $\Theta$. Then it is easy to check that $L_{i,n}=\ketbra{i}{i}K_n$ forms a set of incoherent Kraus operators implementing $\Delta\Theta$. Therefore measuring coherence is free in IO and consequently also in MIO. From the elemental free operations in LOP follows directly that measuring coherences on the wire is free as well and hence the resource cost is zero for LOP, IO and MIO as promised.
This means that the resource cost to measure coherence is the same for all states in all settings on the level of states, and therefore not a (faithful) measure.

The above arguments also show that it is impossible to construct any coherence theory on the level of states which leads to a resource cost well suited to quantify the coherence detection ability of operations: If any operation that can detect coherence is free, the resource cost is not faithful. Otherwise Eq.~(\ref{eq:NoSimulation}) applies and the resource cost is not defined.

In case we investigate the resource creation ability of operations, resource generation capacities become meaningful quantities and allow to quantify the value of operations. But even in this case, a direct quantification using a resource theory of operations has its advantage: Since there exists an infinitude of measures quantifying the amount of resources in states~\cite{vidal2000entanglementMonotones,du2015coherence,du2015erratum}, there exist also an infinitude of different resource generation capacities. This introduces a level of uncertainty about the quantification of coherence on the level of states which we can avoid when using a resource theory on the level of operations.

Next we want to clarify how our work relates to the concept of coherence witnesses introduced in \cite{PhysRevLett.116.150502}. As the name suggest, a coherence witness can be used to reveal the presence of coherences in certain states, but fails to detect it in others. Our framework could therefore be used to quantify how well a given coherence witness is able to detect coherences. 

Finally, let us show the relevance of coherence detecting operations in multi-path interferometric experiments, which are of considerable importance both in fundamental science and technology~\cite{michelson1887relative,PhysRevLett.59.2044,hariharan2010basics,PhysRevLett.116.061102adapted}. We identify the different paths which the particles can take inside the interferometer with pure incoherent states.
Following~\cite{Biswas20170170}, a (generalized) interferometric experiment consists of three steps: first a state with some coherence is created (for example by a  beam splitter), then some path-dependent phases are imprinted on the state and finally a measurement extracting information about these phases takes place (usually involving a second beam splitter and detectors).
This setup is an particularly easy example to demonstrate why both the ability to create coherence and the ability to detect it are valuable resources: If we cannot create coherence in the first step, the second step cannot imprint the information we are interested in onto the state, since path-dependent phases only affect coherences. In addition, if we cannot detect these coherences in the third step, we cannot acquire any information about the imprinted phases. It was analyzed in~\cite{Biswas20170170} how the coherence after the first step influences the obtainable information about the phases when optimal detection procedures are used in the third step. We leave it to future work to quantitatively connect the amount of information we can extract and the dynamic properties of coherence, including its detection. The goal of this is not to describe a new experiment, but to understand the relevance of quantum mechanical resources in technological applications. Such an improved understanding facilitates an improved design of new technologies. These will be far more complex than interferometric experiments and are assumed to give considerable advantages in communication, sensing, and computation. Our contribution consists in offering tools to analyze the potential of such technologies in a systematic way.

\end{document}